\documentclass[authoryear,orivec,cleveref,thm-restate]{lipics-v2021}
\usepackage{graphicx} 
\usepackage{lineno}
\usepackage{url}
\usepackage{float}
\usepackage{amsmath}
\usepackage{amssymb}
\usepackage{amsthm}
\usepackage{algorithm}
\usepackage{algorithmicx}
\usepackage[noend]{algpseudocode}
\usepackage[pdftex,dvipsnames]{xcolor}
\usepackage{tikzlings}
\usepackage{orcidlink}
\usepackage{tikzducks}
\usepackage{tikzsymbols}
\usepackage{tikz}
\usepackage{mathtools}
\usepackage{thmtools}
\usepackage{hyperref}
\usepackage{xspace}
\usepackage[new]{old-arrows}
\usepackage{float}
\usepackage{complexity}
\usepackage{caption}
\usepackage{subcaption}
\usepackage{graphicx}
\usepackage{mathabx}
\usepackage{stmaryrd}
\usepackage{wasysym}
\usepackage{pgfplots}
\pgfplotsset{compat=1.17}

\newcommand{\subp}[1]{\subparagraph*{#1.}}

\algdef{SE}[DOWHILE]{Do}{doWhile}{\algorithmicdo}[1]{\algorithmicwhile\ #1}%

\newcommand{\Fc}{\mathcal{F}}

\newcommand{\Dc}{\mathcal{D}}

\newcommand{\Pc}{\mathcal{P}}

\newcommand{\prob}{\mathbb{P}}

\newcommand{\Nb}{\mathbb{N}}
\newcommand{\Rb}{\mathbb{R}}
\newcommand{\Eb}{\mathbb{E}}
\newcommand{\Qb}{\mathbb{Q}}
\newcommand{\Zb}{\mathbb{Z}}

\newclass{\TOWER}{TOWER}
\newclass{\ACKERMANN}{ACKERMANN}
\newclass{\EXPTIME}{EXPTIME}
\newclass{\IIEXPTIME}{2-EXPTIME}
\newclass{\NEXPTIME}{NEXPTIME}
\newclass{\SQRTSUM}{SqrtSum}
\newclass{\THREESAT}{3SAT}
\newclass{\PTIME}{PTIME}
%\newclass{\RGame}{2PlayerReach}

\usepackage{graphicx}
\usepackage{array}
\usepackage{tikz}
\usetikzlibrary{automata, arrows, positioning, decorations.markings, decorations.pathreplacing}
\usetikzlibrary {decorations.pathmorphing, decorations.shapes}
\usetikzlibrary{backgrounds,fit,shapes.geometric}
\usetikzlibrary {graphs, quotes}
\usetikzlibrary{calc}
\usetikzlibrary{shapes,shapes.geometric,fit}
\usepackage[new]{old-arrows}
\usetikzlibrary {graphs,shapes.geometric,quotes}
\usetikzlibrary{decorations.pathreplacing}
\usetikzlibrary{automata}
\usetikzlibrary{calc}
\usetikzlibrary{arrows,automata,decorations.pathmorphing}
\usetikzlibrary{shapes,shapes.geometric,fit,calc,automata}
\usepackage{xargs} 
\usepackage{todonotes}
\newcommandx{\theju}[2][1=]{\todo[linecolor=blue,backgroundcolor=blue!25,bordercolor=blue,#1]{\tiny T: #2}}
\newcommandx{\leon}[2][1=]{\todo[linecolor=red,backgroundcolor=red!25,bordercolor=red,#1]{\tiny L:#2}}
\newcommandx{\aliA}[2][1=]{\todo[linecolor=green,backgroundcolor=red!25,bordercolor=red,#1]{\tiny L:#2}}

\usepackage{latexsym}
\usepackage{wasysym}

% \newcommand*\playercircled[1]{\tikz[baseline=(char.base)]{
%             \node[shape=circle,draw,inner sep=2pt] (char) {#1};}}
% \newcommand*\playersquared[1]{\tikz[baseline=(char.base)]{
%             \node[shape=rectangle,draw,inner sep=2pt] (char) {#1};}}

% \newtheorem{theorem}{Theorem}[section]
% \newtheorem{lemma}[theorem]{Lemma}
% \newtheorem{proposition}[theorem]{Proposition}
% \newtheorem{corollary}[theorem]{Corollary}
% \newtheorem{definition}{Definition}
% % \newtheorem{mainthm}{Theorem}
% \renewcommand{\themainthm}{\Alph{mainthm}}

\newtheorem{problem}{Problem}

\theoremstyle{claimstyle}

\renewcommand{\Game}{\mathcal{G}}

\newcommand{\MDProc}{\mathcal{P}}
\newcommand{\MChain}{\mathcal{M}}

\newcommand{\bit}{\mathsf{bit}}
\newcommand{\rel}{\mathsf{rel}}

\newcommand{\distribution}{\mu}

\newcommand{\bsigma}{{\bar{\sigma}}}
\newcommand{\btau}{{\bar{\tau}}}

\newcommand{\bx}{{\bar{x}}}
\newcommand{\by}{{\bar{y}}}

\renewcommand{\p}{\mathsf{p}}
\newcommand{\val}{\mathsf{val}}

\newcommand{\mul}{\mathsf{mul}}
\newcommand{\sfdef}{\mathsf{def}}

\newcommand{\defas}{\coloneqq}

\renewcommand{\epsilon}{\varepsilon}
\renewcommand{\phi}{\varphi}

\newcommand{\solver}{\mathfrak{S}}

\renewcommand{\Circle}{\mathbin{\raisebox{-0.34ex}{\scalebox{1.5}{$\circ$}}}}
\newcommand{\smalltriangle}{\scalebox{0.5}{$\triangle$}}
\newcommand{\smallpentagon}{\scalebox{0.75}{$\pentagon$}}

\tikzset{every node/.append style={initial text={}, node distance=15mm}}

\tikzset{stoch/.style={state, fill=black, inner sep=0.5pt, text=white, minimum size=0pt}}
\tikzset{circlev/.style={state,minimum size=20pt,fill, top color=white, bottom color=black!20}}
\tikzset{squarev/.style={state, rectangle, minimum size=20pt,fill, top color=white, bottom color=black!20}}
\tikzset{diamondv/.style={state, diamond, minimum size=20pt,fill, top color=white, bottom color=black!20}}
\tikzset{trianglev/.style={state, regular polygon, regular polygon sides=3, minimum size=15pt, inner sep=0.8pt, fill, top color=white, bottom color=black!20}}
\tikzset{pentagonv/.style={state, regular polygon, regular polygon sides=5, minimum size=20pt,fill, top color=white, bottom color=black!20}}
\tikzset{hexagonv/.style={state, regular polygon, regular polygon sides=6, minimum size=20pt,fill, top color=white, bottom color=black!20}}
\tikzset{gadget/.style={state, rectangle, rounded corners, dashed}}
\tikzset{mem/.style={state, minimum size=25pt, text=white, fill, top color=black!40, bottom color=black!60, rectangle, rounded corners}}
\hideLIPIcs
\nolinenumbers
\authorrunning{Ali Asadi, L\'eonard Brice, Krishnendu Chatterjee, K. S. Thejaswini}
\bibliographystyle{plainurl}
\title{$\epsilon$-Stationary Nash Equilibria in Multi-player Stochastic Graph Games}
\keywords{Nash Equilibria, $\epsilon$-Nash equilibria, Approximation, Existential Theory of Reals}

\Copyright{Ali Asadi, L\'eonard Brice, Krishnendu Chatterjee, K. S. Thejaswini}

\author{Ali Asadi}{Institute of Science \& Technology Austria \and \url{https://ali-asadi.com/}}{ali.asadi@ista.ac.at}{https://orcid.org/0009-0005-2839-953X}{}

\author{Léonard Brice}{Université Libre de Bruxelles, Belgium \and \url{https://lnrdbrice.github.io/lnrdbrice/}}{leonard.brice@ulb.be}{https://orcid.org/0000-0001-7748-7716}{}

\author{Krishnendu Chatterjee}{Institute of Science \& Technology Austria \and \url{https://pub.ista.ac.at/~kchatterjee/}}{Krishnendu.Chatterjee@ist.ac.at}{https://orcid.org/0000-0002-4561-241X}{}

\author{K. S. Thejaswini}{Institute of Science \& Technology Austria \and \url{https://thejaswiniraghavan.github.io/}}{thejaswini.k.s@ista.ac.at}{https://orcid.org/0000-0001-6077-7514}{}

\category{} %optional, e.g. invited paper

\relatedversion{This paper has been published at  IARCS Annual Conference on Foundations of Software Technology and Theoretical Computer Science}
%optional, e.g. full version hosted on arXiv, HAL, or other respository/website
\relatedversiondetails[linktext={https://doi.org/10.4230/LIPIcs.FSTTCS.2025.9}, cite=ABCT25]{Conference version}{ https://doi.org/10.4230/LIPIcs.FSTTCS.2025.9} %linktext and cite are optional
\ccsdesc[500]{Theory of computation~Automata over infinite objects}
%\supplement{}%optional, e.g. related research data, source code, ... hosted on a repository like zenodo, figshare, GitHub, ...
%\supplementdetails[linktext={opt. text shown instead of the URL}, cite=DBLP:books/mk/GrayR93, subcategory={Description, Subcategory}, swhid={Software Heritage Identifier}]{General Classification (e.g. Software, Dataset, Model, ...)}{URL to related version} %linktext, cite, and subcategory are optional

\funding{This work is a part of project VAMOS that has received funding from the European Research Council (ERC), grant agreement No 101020093.}

%%%%%%%%%%%%%%%%%%%%%%%%%%%%
\EventEditors{C. Aiswarya, Ruta Mehta, and Subhajit Roy}
\EventNoEds{3}
\EventLongTitle{45th IARCS Annual Conference on Foundations of Software Technology and Theoretical Computer Science (FSTTCS 2025)}
\EventShortTitle{FSTTCS 2025}
\EventAcronym{FSTTCS}
\EventYear{2025}
\EventDate{December 17--19, 2025}
\EventLocation{BITS Pilani, K K Birla Goa Campus, India}
\EventLogo{}
\SeriesVolume{360}
\ArticleNo{34}
%%%%%%%%%%%%%%%%%%%%%%%%%%%%%%%%%%%%%%%%%%%%%%%%%%%%%%
\begin{document}

\maketitle
\begin{abstract}
A strategy profile in a multi-player game is a Nash equilibrium if no player can unilaterally deviate to achieve a strictly better payoff. 
A profile is an $\epsilon$-Nash equilibrium if no player can gain more than $\epsilon$ by unilaterally deviating from their strategy.
In this work, 
we use $\epsilon$-Nash equilibria to approximate the computation of Nash equilibria.
Specifically, we focus on turn-based, multiplayer stochastic games played on graphs, where players are restricted to stationary strategies---strategies that use randomness but not memory.

The problem of deciding the constrained existence of stationary Nash equilibria---where each player's payoff must lie within a given interval---is known to be $\exists\mathbb{R}$-complete in such a setting (Hansen and Sølvsten, 2020). We extend this line of work to stationary $\epsilon$-Nash equilibria and present an algorithm that solves the following promise problem: given a game with a Nash equilibrium satisfying the constraints, compute an $\epsilon$-Nash equilibrium that \emph{$\epsilon$-satisfies} those same constraints---satisfies the constraints up to an $\epsilon$ additive error. Our algorithm runs in $\FNP^\NP$ time.

To achieve this, we first show that if a constrained Nash equilibrium exists, then one exists where the non-zero probabilities are at least an inverse of a double-exponential in the input. We further prove that such a strategy can be encoded using floating-point representations, as in the work of Frederiksen and Miltersen (2013), which finally gives us our $\FNP^\NP$ algorithm. 

We further show that the decision version of the promise problem is $\NP$-hard. 
Finally, we show a partial tightness result by proving a lower bound for such techniques: if a constrained Nash equilibrium exists, then there must be one where the probabilities in the strategies are double-exponentially small.

\end{abstract}
%\newpage
% \tableofcontents
% \newpage
\section{Introduction}\label{sec:intro}

Modelling decentralised systems involving multiple agents requires understanding the interactions between different players, each with their own objectives. Stochastic games---graphs in which some nodes are controlled by agents and others are controlled by a random environment---provide a modelling framework for a wide range of domains, including epidemic processes~\cite{Lef81}, formal verification~\cite{FKNP11}, learning theory~\cite{AJKS21}, cyber-physical systems~\cite{SEC16}, distributed and probabilistic programs~\cite{dAHJ01}, and probabilistic planning~\cite{TKI10}.

These stochastic multi-player turn-based games are a specific class of models that capture interactions between multiple agents on such graph arenas. These games have been extensively studied, with several results characterizing their computational complexity~\cite{CMJ04,UW11} and others studying restrictions and different payoff functions that yield tractable fragments~\cite{UW11, BHT25}. While it is known that Nash equilibria (NEs) always exist in these settings, the equilibria that arise in  proofs are often adversarial in nature, offering poor outcomes for all players~\cite{CMJ04,Umm10}. A more compelling question is whether one can decide, and compute if possible, a Nash Equilibrium that satisfies certain payoff constraints. However, this problem becomes undecidable when players are allowed unbounded memory. Indeed, it is already known that even checking whether a player can achieve a payoff within a given interval is undecidable with just 10 agents~\cite[Theorem 4.9]{UW11}. Restricting players to deterministic (pure) strategies does not help, as the problem remains undecidable. Thus, for any hope of tractability, we must restrict each player to use bounded memory. In practice, this is also desirable: strategies requiring large memory lead to complex controllers and are harder to implement. Fortunately, finite-memory strategies can be encoded directly in the state space of the game, and in many cases, it suffices to consider stationary strategies, which require no memory at all.

 In this work, we only focus on Nash equilibria where each player is restricted to stationary strategies. Indeed, henceforth, whenever we mention strategies, we only mean stationary strategies, unless specified otherwise. However, it is known that the constrained existence problem even when restricted to such stationary strategies is $\exists\mathbb{R}$-complete~\cite{HS20}. 
Moreover, for the functional version of the problem, the probabilities used to indicate the distribution of these stationary strategies can have irrational values~\cite[Theorem 4.6]{UW11} even for games with just four players. Indeed, such strategies cannot be executed in real-world scenarios since it is hard to simulate choosing an edge with such irrational probabilities. 

The standard definition of Nash equilibrium requires that no player can improve their payoff by any non-zero amount by unilaterally deviating from their strategy even if the improvement is arbitrarily small~\cite{Nas50}. This strict notion of optimality contributes significantly to the computational complexity of deciding whether a constrained Nash equilibrium exists. From a practical perspective, however, it is reasonable to assume that players are unlikely to deviate from their current strategies for only negligible gains.

\subparagraph*{Our setting.} Motivated by this discussion, we consider the relaxed variant of the equilibrium concept that is also well-studied~\cite{LMM03, DGP09}. Specifically, we study the problem of computing $\epsilon$-approximate Nash equilibria---in which the expected utility of each player is required to be within $\epsilon$ 
of the optimum response to the other players’ strategies---while still satisfying given constraints on the players’ payoffs. Formally, we focus on multi-player turn-based games with terminal rewards and for a (stationary) strategy profile that is an $\epsilon$-NE that also approximately satisfies the constraints (the constraints are also satisfied up to an $\epsilon$-additive factor), assuming there is an exact stationary NE. 
%\ali{we wrote concurrent, but it should be turn-based games. Also, the decision problem does not make sense.}

The decision version of this problem is defined as follows: given such a game, determine whether either (i) there exists an exact NE that satisfies the constraints, or (ii) no $\epsilon$-approximate NE satisfies the constraints $\epsilon$-approximately. Instances that fall outside these two cases---that is, where an $\epsilon$-approximate equilibrium exists but no exact one does---are considered indeterminate, and the output may be arbitrary in those cases. Note that whenever there is an exact NE, by definition, there exists an $\epsilon$-NE. 

\subparagraph*{Our results.} We show that this approximate variant is computationally more tractable than the general problem of deciding the existence of a stationary constrained equilibrium. 
%A solution to the existential theory of reals is only known to be in PSPACE?
In particular, we present an algorithm that solves the problem within the class $\FNP$ with access to an $\NP$ oracle. 
We also show that the probabilities required for the exact version of the problem can be double-exponentially small, proving some evidence of hardness of the problem. 
Finally, we show that the decision version of the problem is $\NP$-hard.

\paragraph*{Technical overview} 
\subparagraph*{Upper bound.}  Recall our result that the problem lies in $\FNP^\NP$. Our approach relies on showing that if a stationary equilibrium exists, then there is one in which the probability values used in the strategies are not too small---specifically, they are lower-bounded by the inverse of a doubly exponential function of the input size. 
This structural property is established using techniques from the work of Hansen, Koucký, and Miltersen on zero-sum concurrent reachability games~\cite{HKM09}. However, this bound alone is insufficient, because representing such tiny probabilities explicitly requires at least exponentially many bits to represent using fixed-point presentation.

To represent the probabilities used in strategies more succinctly than with fixed-point representation, we use floating-point values, a standard tool in numerical analysis. While floating-point representations cannot express the exact probabilities required in an exact Nash equilibrium, they are sufficient for representing approximate equilibria. We prove (see \cref{lem:floating-point-eps-nash}) that if an exact Nash equilibrium exists, then there is also an $\epsilon$-NE that can be encoded using floating-point numbers.

The proof relies on two ideas from the work of Frederickson and Milterson~\cite{FM13}. First, any probability distribution can be approximated by one using floating-point values, such that the difference between them is small in a precise sense. Second, when all the probabilities in a Markov chain are approximated in this way, the expected value of the chain changes by at most a polynomial function of the approximation error. To formalise this notion of approximation, a distance metric introduced by Solan~\cite{Sol03} is used to show how values of Markov chains behave under small perturbations. 

Using these results, we design an algorithm for the problem. The algorithm first guesses a stationary approximate equilibrium that satisfies the payoff constraints up to the given error. Our earlier results ensure that such a guess can be represented using floating-point numbers with only polynomially many bits. The algorithm then verifies that this strategy profile is indeed an approximate equilibrium by checking, for each player, whether any alternative strategy would improve their payoff by more than the allowed margin of error. This check reduces to solving a Markov decision process for each player, where the strategies of the other players are fixed. We show that solving this single-player game can be done with a single call to an $\NP$ oracle for each player (see \cref{Result:Approximate_MDPs}). Therefore, the search problem lies in the class $\FNP^\NP$, and the corresponding decision problem is in $\NP^\NP$.

\subparagraph*{Small probabilities.} We first consider the limits of the above approach. Prior work by Deligkas, Fearnley, Melissourgos, and Spirakis~\cite{DFMS18} observed that the class $\epsilon$-$\exists\mathbb{R}$, which captures the complexity of deciding whether approximate solutions exist to systems of real-valued constraints, is polynomial time reducible to the  class $\exists\mathbb{R}$ itself. This equivalence in the complexity of $\exists\Rb$ and its approximation version hinges on the fact that solutions to such existential theory of the reals (ETR) problems can involve values that are double-exponentially large. To address this, they identified a restricted fragment of ETR for which $\epsilon$-approximate solutions can be computed using a quasi-polynomial time approximation scheme (QPTAS).

In our setting, we use a key structural result (see \cref{lem:double_exp_NE_exists}) showing that stationary Nash equilibria, when they exist, can involve probabilities that are double-exponentially small, but not smaller. One might suspect that such small probabilities are merely theoretical and not actually be needed to represent probabilities occurring in such equilibria. However, in Section~\ref{sec:smallprobs}, we present a concrete example with just five players where double-exponentially small probabilities are indeed necessary for an exact equilibrium to exist. We note that previous results establishing $\exists\mathbb{R}$-completeness for related problems required seven players, underscoring the tightness of our example. 

\subparagraph*{Hardness results.} Finally, in \cref{sec:lowerbounds}, we also show that even the decision version of the promise problem is $\NP$-hard, via a reduction from the classical $\THREESAT$ problem. While it was already known that the constrained existence of stationary Nash equilibria is NP-hard~\cite[Theorem 4.4]{UW11} and later shown to be even $\exists\Rb$-complete, we prove that even the relaxed version of the problem---deciding whether an approximate (i.e., $\epsilon$-close) stationary Nash equilibrium exists that satisfies the given constraints---is also $\NP$-hard.
\subparagraph*{Related work.}
The approximation techniques we use to establish our $\FNP^\NP$ upper bound have appeared in various forms in the literature, particularly in the context of approximating values in concurrent games. For example, Frederiksen and Miltersen~\cite{FM13} introduced similar floating-point approximation methods to compute values of concurrent reachability games up to arbitrary precision. Their work placed the problem in the complexity class $\TFNP^\NP$. However, for this class of games, no corresponding hardness results—such as the $\NP$-hardness we establish for our setting—were previously known.

These techniques have also been extended to other objectives in the same setting of concurrent games. In particular, Asadi, Chatterjee, Saona, and Svoboda~\cite{ACSS24} applied related methods to show $\TFNP^\NP$ upper bounds for games with stateful-discounted objectives and parity objectives.

More recently, Bose, Ibsen-Jensen, and Totzke~\cite{BIT24} studied $\epsilon$-approximate Nash equilibria in concurrent and partial-observation settings, also using approximation-based techniques. However, their work differs from ours in two key ways. First, they do not address the \emph{constrained existence problem} for $\epsilon$-Nash equilibria. We note that the unconstrained variant of the problem, which they consider, is significantly easier in terms of computational complexity. Second, they assume the number of players is fixed, whereas in our setting, the number of players is part of the input. To our knowledge, our work is the first to analyse constrained $\epsilon$-Nash equilibria in multi-player games where the number of players is not fixed.

\section{Preliminaries}
We assume that the reader is familiar with the basics of probability and graph theory. However, we define some concepts for establishing notation.

\subp{Probabilities} Given a (finite or infinite) set of outcomes $\Omega$ and a probability measure $\prob$ over $\Omega$, let $X$ be a random variable over $\Omega$, i.e., a mapping $X: \Omega \to \Rb$. We write $\Eb^\prob[X]$, or simply $\Eb[X]$, for the expected value of $X$, when defined.
Given a finite set $S$, a \emph{probability distribution} over $S$ is a mapping $d: S \to [0,1]$ that satisfies the equality $\sum_{x \in S} d(x) = 1$.
% We write $\Supp (d)$ for the \emph{support} of the distribution $d$, that is, the set of elements $x \in S$ such that $d(x) > 0$.

\subp{Size of numbers}
Integers are always implicitly assumed to be represented with their binary encoding.
For a given integer $n \in \Nb$, we therefore write $\bit(n) = \lceil \log_2(n+1) \rceil$ for the number of bits required to write $n$.
Similarly, unless stated otherwise, a rational number $\frac{p}{q}$, where $p$ and $q$ are co-prime, is represented by the pair of the encodings of $p$ and $q$, and we write $\bit\left( \frac{p}{q} \right) = \bit(p) + \bit(q) + 1$.

\subp{Graphs and games}
A directed graph $(V,E)$ consists of a set $V$ of \emph{vertices} and a set $E$ of ordered pairs of vertices, called \emph{edges}. 
%In a directed graph $(V, E)$, for each vertex $u$, we write $E(u)$ to denote the set $E \cap (\{u\} \times V)$.
For simplicity, we often write $uv$ for an edge $(u, v)\in E$.
A \emph{path} in the directed graph $(V, E)$ is a (finite or infinite) ordered sequence of vertices from $V$ such that every pair of two consecutive elements is an edge. 
A \emph{cycle} is a path with distinct vertices such that the last element of the sequence and the first one also forms an edge.

% \subp{Games}
Throughout this paper, we use the word \emph{game} for multiplayer simple quantitative turn-based games played on graphs.

\begin{definition}[Game]
    A \emph{game} is a tuple 
    $\Game$ %$= (V, E, \Pi, (V_i)_{i \in \Pi}, \p, \mu, v_0)$, 
    that consists of:
    \begin{itemize}
        \item a directed graph $(V, E)$, called the \emph{underlying} graph of $\Game$;

        \item a finite set $\Pi$ of \emph{players};

        \item a partition $(V_i)_{i \in \Pi \cup \{?\}}$ of the set $V$, where $V_i$ denotes the set of vertices \emph{controlled} by player $i$, and the vertices in $V_?$ are called \emph{stochastic vertices};

        \item a \emph{probability function} $\p: E(V_?) \to [0, 1]$, such that for each stochastic vertex $s$, the restriction of $\p$ to $E(s)$ is a probability distribution;

        \item a mapping $\mu: T \to [0,1]^\Pi$ called \emph{payoff function}, where $T$ is the set of \emph{terminal vertices}, that is, vertices of the graph $(V, E)$ that have no outgoing edges.
        We also write $\mu_i$, for each player $i$, for the function that maps a terminal vertex $t$ to the $i^\text{th}$ coordinate of the tuple $\mu(t)$.

        \item a vertex $v_0\in V$, which is the \emph{initial} vertex of the game.
    \end{itemize}
\end{definition}
When referring to a game $\Game$, we will often use the notations $V$ for vertices, $E$ for edges, $\Pi$ for players, and so on without necessarily recalling them.
%Often to make the first vertex explicit, we write $\Gc_{\|v_0}$.

\begin{definition}[Markov decision process, Markov chain]
    A \emph{Markov decision process} is a game with one player.
    A \emph{Markov chain} is a game with zero players.
\end{definition}

\subp{Plays}
We call \emph{play} a path in the underlying graph that is infinite, or whose last vertex is a terminal vertex. %\theju{The only place where play is used: it talks about how it continues...}
%Other paths are called \emph{histories}.
%We will then use the notations $\Hist(\Game)$ to denote the set of histories, and $\Plays(\Game)$ to denote plays.
%For a history $h = h_0 \dots h_n$, we write $\last(h) = h_n$.
%We will also write $\Hist_i(\Game)$ for the set of histories whose last vertex is controlled by player $i$.
The payoff functions $\mu$ and $\mu_i$, defined only for terminal vertices, are extended naturally to  plays as follows: for a play of the form $ht$, with $t \in T$, we define $\mu(ht) = \mu(t)$, and for an infinite play $\pi$, we define $\mu(\pi) = (0)_{i \in \Pi}$ (all players receive the payoff $0$ in an infinite path where no terminal vertex is reached).%\leon{For each of these notions/notations, we should check whether we use them}%\theju{history was never used. Removed it}

\subp{Strategies, and strategy profiles}
In this paper, by \emph{strategy}, we mean a stationary strategy, that is, a strategy that only depends on the current vertex.

    Thus, in a game $\Game$, a \emph{strategy} for player $i$ is a mapping $\sigma_i$ that maps each vertex $v \in V_i$ to a probability distribution over the neighbours of $v$.
%A path $\pi_0 \pi_1 \dots$ (be it a history or a play) is \emph{compatible} with the strategy $\sigma_i$ if for each $k$ such that $\pi_k \in V_i$, the probability that the strategy $\sigma_i$ selects $h_{k+1}$ from the vertex $h_k$ is positive, that is, we have $\sigma_i(h_k)(h_{k+1}) > 0$.\leon{Do we use this?}
A \emph{strategy profile} for a subset $P \subseteq \Pi$ is a tuple $(\sigma_i)_{i \in P}$, that we usually write $\bsigma$ if $P = \Pi$, and $\bsigma_{-i}$ if $P = \Pi\setminus\{i\}$ for some player $i$. 
We then write $(\sigma_{-i}, \sigma^\prime_i)$ to denote the strategy profile $\btau$ defined by $\tau_i = \sigma^\prime_i$ and $\tau_j = \sigma_j$ for $j \neq i$. 
%We sometimes write $\bsigma(v)$ to mean $\sigma_i(v)$ where $i$ is the player controlling $v$, or $\p(v)$ when $v \in V_?$.\leon{Do we?}

A strategy profile $\bsigma$ for all players in the game $\Game$ defines a probability measure $\prob_\bsigma$ over plays---which turns the payoff functions $\mu_i$ into random variables.
We then write $\Eb(\bsigma)$ for the expectation operator $\Eb^{\prob_\bsigma}$.
If the game is a Markov chain, then we just write $\Eb$ to represent this value. We say the maximum payoff of an MDP is the maximum of all expectations over all stationary strategies and write $\val = \max_\sigma \Eb(\bsigma)$ to denote it.

\subp{Equilibria}
In such games, we study \emph{equilibria}, that is, strategy profiles that offer some stability guarantees.
The most famous notion of equilibrium is the \emph{Nash equilibrium}~\cite{Nas50}.

\begin{definition}[Nash equilibrium]
    A \emph{Nash equilibrium}, or \emph{NE} for short, in the game $\Game$ is a strategy profile $\bsigma$ such that for each player $i$ and every strategy $\sigma'_i$, we have $$\Eb(\bsigma_{-i}, \sigma'_i)[\mu_i] \leq \Eb(\bsigma)[\mu_i].$$
\end{definition}

A classical quantitative relaxation of Nash equilibria is that of $\epsilon$-approximate Nash equilibria, or just $\epsilon$-Nash equilibria. 

\begin{definition}[$\epsilon$-Nash equilibrium]
    Let $\epsilon > 0$.
    An \emph{$\epsilon$-Nash equilibrium}, or \emph{$\epsilon$-NE} for short, in the game $\Game$ is a strategy profile $\bsigma$ such that for each player $i$ and every strategy $\sigma'_i$, we have $\Eb(\bsigma_{-i}, \sigma'_i)[\mu_i] - \epsilon \leq \Eb(\bsigma)[\mu_i]$.
\end{definition}

\subp{Problem}
A classical decision problem about Nash equilibria is the following one.

\begin{problem}[Constrained existence problem of Nash equilibria]\label{prob:CENE}
    Given a game $\Game$ and two vectors $\bx, \by \in \Qb^\Pi$, called \emph{threshold vectors}, does there exist an NE $\bsigma$ in $\Game$ that satisfies the inequality $\bx \leq \Eb(\bsigma)[\mu] \leq \by$?
\end{problem}

\cref{prob:CENE} is known to be undecidable if the players are allowed to use memory~\cite{UW11}, which is excluded by our formalism here.
Where players are restricted to \emph{stationary} strategy---as in our definition---this problem is known to be $\exists\Rb$-complete~\cite{HS20}, where $\exists\Rb$ is the complexity class of problems that can be reduced to the satisfiability of a formula in the existential theory of the reals.
In order to obtain a more tractable problem, we consider here an approximated version defined using the notion of $\epsilon$-NEs: we wish to obtain, for sure, the answer \emph{yes} when an NE satisfying the constraint exists, and to obtain, for sure, the answer \emph{no} when we are far from having such an NE because there is not even an \emph{$\epsilon$-NE} that satisfies the constraint \emph{up to $\epsilon$}---but on the limit case, we can accept any answer.
Formally, we want an algorithm that solves the following problem.

\begin{problem}[Approximate constrained existence problem of Nash equilibria]
    Given a game $\Game$, two vectors $\bx, \by \in \Qb^\Pi$, and a rational number $\epsilon > 0$ input using fixed point representation, such that either:
    \begin{enumerate}
        \item\label{itm:positive_instance} there is an NE $\bsigma$ in $\Game$ with $x_i \leq \Eb(\bsigma)\left[ \mu_i \right] \leq y_i$ for each player $i$; or 

        \item there is no $\epsilon$-NE in $\Game$ satisfying $x_i - \epsilon \leq \Eb(\bsigma)\left[ \mu_i \right] \leq y_i + \epsilon$ for each player $i$;        
    \end{enumerate}
    holds, then are we in case~\ref{itm:positive_instance}?
\end{problem}

We call \emph{functional} approximate constrained NE problem the functional version of this problem, in which the answer \emph{yes} is replaced by a succinct representation of an $\epsilon$-NE that satisfies the constraint up to $\epsilon$.
In the sequel, we show a $\FNP^\NP$ upper bound on this functional problem, which implies an $\NP^\NP$ upper bound on the non-functional one; and an $\NP$ lower bound on the non-functional approximated problem, which implies an $\FNP$ lower bound on the functional one.

\section{Finding $\epsilon$-Nash equilibria}\label{sec:UB}

    In this section, we present the $\FNP^\NP$ upper bound for the functional approximate constrained NE problem.
    We always use $n$ to represent the number of vertices in the game and $m$ to represent the number of edges in the game,  and $\tau$ which corresponds to the  bit size used to represent the probabilities. 

    Throughout this section, we assume that we are given an instance $(\Game, \bx, \by, \epsilon)$ of the functional approximate constrained problem.
We call \emph{($\epsilon$-)constrained NE} an ($\epsilon$-)NE in $\Game$ such that each player $i$'s payoff lies in the interval $[x_i, y_i]$ (resp. $[x_i-\epsilon, y_i+\epsilon]$).
    The rest of this section is dedicated to the proof of the following theorem.

%\leonard{I would add: In all this section, we assume that we are given an instance $(\Game, \bx, \by, \epsilon)$ of the functional approximate constrained problem.
%We call \emph{($\epsilon$-)constrained NE} an ($\epsilon$-)NE in $\Game$ such that each player $i$'s payoff lies in the interval $[x_i, y_i]$ (resp. $[x_i-\epsilon, y_i+\epsilon]$).}

\begin{restatable}{theorem}{NPcoNPUB}
\label{thm:np-np-upper-bound}
    There exists an $\FNP^\NP$ procedure to solve the functional approximated constrained problem of NEs.
\end{restatable}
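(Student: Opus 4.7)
The plan is to combine three ingredients: a structural lower bound on the nonzero probabilities appearing in a constrained NE, a floating-point rounding argument that turns such an NE into a succinctly representable $\epsilon$-NE, and a verification routine that reduces to solving one MDP per player.

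First, I would establish (the content later packaged as \cref{lem:double_exp_NE_exists}) that whenever there exists a constrained NE in $\Game$, there also exists one in which every nonzero transition probability is at least $2^{-2^{q(n,m,\tau)}}$ for some polynomial $q$. The argument follows the template of Hansen, Koucký, and Miltersen for concurrent reachability games: the conditions defining a constrained NE (stationarity, probability distributions summing to one, the expected payoffs computed through a linear system parameterised by the mixing probabilities, and the constraint inequalities together with the non-deviation inequalities for each player) can be packaged as a polynomial system of polynomial degree and polynomial bit-size. Standard algebraic bounds on solutions of such systems then yield the claimed doubly exponential lower bound on the nonzero coordinates.

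Next, I would use a floating-point approximation argument (the later \cref{lem:floating-point-eps-nash}) to convert the NE just obtained into a strategy profile $\bsigma$ whose per-edge probabilities are floating-point numbers with polynomially many bits and which is both an $\epsilon$-NE and $\epsilon$-constrained. The two inputs are those of Frederickson and Miltersen: every distribution can be rounded to a floating-point distribution whose distance to the original in Solan's metric is tiny, and small perturbations in that metric perturb the expected payoff of the induced Markov chain only polynomially in the parameters. Applying this both to the induced chain $\MChain_{\bsigma}$ and to each single-deviator chain $\MChain_{(\bsigma_{-i},\sigma'_i)}$, one gets that $\bsigma$ still $\epsilon$-satisfies the threshold inequalities $x_i-\epsilon\le\Eb(\bsigma)[\mu_i]\le y_i+\epsilon$ and that no deviator can gain more than $\epsilon$. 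The doubly exponential lower bound from the first step is precisely what makes polynomially many mantissa bits sufficient to keep these perturbations controlled.

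Finally, the algorithm works as follows. Nondeterministically guess a stationary profile $\bsigma$ whose probabilities are floating-point numbers of polynomial bit-length; by the previous paragraph, if we are in case~\ref{itm:positive_instance} such a witness exists. For each player $i$, compute $\Eb(\bsigma)[\mu_i]$ by solving in polynomial time the linear system associated with the Markov chain $\MChain_{\bsigma}$ (all coefficients have polynomial bit-size), and verify $x_i-\epsilon\le\Eb(\bsigma)[\mu_i]\le y_i+\epsilon$. To verify the $\epsilon$-NE condition for player $i$, fix $\bsigma_{-i}$ and consider the resulting MDP for player $i$; we must certify that its optimal value is at most $\Eb(\bsigma)[\mu_i]+\epsilon$. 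By \cref{Result:Approximate_MDPs}, this check reduces to a single $\NP$ oracle query per player. Accept and output $\bsigma$ if every check succeeds.

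The main obstacle is the doubly exponential lower bound on nonzero probabilities in step one: it is the bridge between the existence of an exact NE and the existence of a polynomially representable floating-point $\epsilon$-NE, and proving it cleanly requires carefully setting up the semi-algebraic description of constrained NEs in a stochastic game and invoking root-separation bounds of the right form. Once that bound and its floating-point consequence are in hand, the verification step and the overall $\FNP^\NP$ bound are comparatively routine.
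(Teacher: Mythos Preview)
Your outline follows the paper's approach almost exactly: ETR encoding plus Basu--Pollack--Roy to bound nonzero probabilities from below, then the Frederiksen--Miltersen floating-point rounding combined with Solan's continuity to obtain a succinct $\epsilon$-constrained $\epsilon$-NE, then a guess-and-verify procedure using an $\NP$ oracle per player. Two points, one cosmetic and one substantive.

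First, the cross-reference in your step one is wrong: \cref{lem:double_exp_NE_exists} is the lemma asserting that the specific game $\Game^n$ of \cref{sec:smallprobs} \emph{has} a constrained NE; the doubly-exponential lower bound on nonzero probabilities you actually need is \cref{lem:double-exponential-patience}.

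Second, and more importantly, the verification step as you describe it does not run in polynomial time. You write that one computes $\Eb(\bsigma)[\mu_i]$ ``by solving in polynomial time the linear system associated with the Markov chain $\MChain_{\bsigma}$ (all coefficients have polynomial bit-size)''. The coefficients have polynomial bit-size only in the \emph{floating-point} representation $m\cdot 2^e$; as rationals their denominators are $2^{|e|}$ with $|e|$ potentially of order $2^{\mathrm{poly}}$, so exact Gaussian elimination over $\Qb$ blows up to exponentially many bits. The paper does not compute these values exactly: it invokes the Frederiksen--Miltersen approximate-value algorithm for Markov chains with floating-point transitions (\cref{Result: Approximate reachability value}), which produces an $\epsilon/8$-approximation $\widehat{\val}_i$ in genuine polynomial time, and then carries the resulting slack through the comparison against the MDP oracle (threshold $\widehat{\val}_i + \tfrac{3}{4}\epsilon$, error $\epsilon/8$). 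You already cite \cref{Result:Approximate_MDPs} for the deviation check, and that lemma internally relies on the same approximate MC computation, so the fix is simply to use it for the base payoffs as well and track the $\epsilon/8$ error budget as the paper does.
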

We begin by defining the existential theory of the reals  and showing how the problem of computing stationary Nash equilibria can be encoded using sentences in it.

By applying a result of Basu, Pollack, and Roy~\cite{BPR96} (restated in \cref{lem:radius-of-etr}), it can be shown that any solution to the resulting system of polynomial inequalities lies within a ball of at most double-exponential radius. This implies that, in the worst case, the smallest non-zero probabilities in a stationary equilibrium may be as small as inverse double-exponential in the size of the input. 

To represent such small probabilities succinctly, we introduce a floating-point representation that allows encoding of some double-exponentially small values using only polynomially many bits. 
Next, using a result of Solan~\cite{Sol03} (see \cref{Result: Continuity of value in MC with reachability}), we observe that small perturbations in transition probabilities lead to only small changes in the value of the associated Markov chain. This continuity result ensures that approximate equilibria obtained where probabilities are slightly perturbed still yield approximately correct values.
This shows that it is enough to represent the probabilities using floating-point numbers with polynomial precision.

Finally, we provide a $\coNP$ procedure (see \cref{Result: Approximate reachability value in MDPs}) to decide whether the maximum payoff in a Markov Decision Process (MDP), where probabilities are given in floating-point representation, lies above or below a given threshold up to an additive $\epsilon$.

Together, these components give us our desired $\FNP^\NP$ procedure: we guess a stationary strategy profile that approximately satisfies the constraints and verify it using the above results, and use the fact that such strategies admit succinct polynomial-size representations.

% We begin by defining the existential theory of reals and encode the problem of computing stationary NEs using the existential theory of reals.  
% Using a result of Basu, Pollack, and Roy~\cite{BPR96} (restated below in \cref{lem:radius-of-etr}), it follows that all values that satisfy this system of polynomial can be covered by a ball that has a radius of at most double exponential size. This therefore shows that the probabilities are at least double-exponentially large.

% Next, we introduce the definition of floating-point representation of numbers which can be used to represent some double-exponentially small  numbers with only polynomially many bits. 

% Later, using the result of Solan~\cite{Sol03} we observe that small perturbations to probabilities don't change the values of the Markov chain obtained by too much (see \cref{Result: Continuity of value in MC with reachability}).
% This gives us that we can represent the values if an approximate Nash equilibrium using polynomially many bits in a fix-point representation.\theju{Check?} 

% Finally, we give a $\coNP$ procedure to decide if an MDP where the probabilities are represented using fixed-point numbers has a value which is greater than or less than, upto an additive factor $\epsilon$  
% in \cref{Result: Approximate reachability value in MDPs}

% This gives our desired $\FNP^\NP$ procedure which guesses an approximate constrained NE strategy, guaranteed to have a polynomial representation. 

\subp{Existential theory of the reals} 
A sentence in the \emph{existential theory of the reals} is of the form
\[
\phi \;=\; \exists x_1\, \exists x_2\;\cdots\;\exists x_k\;F(x_1,\dots,x_k)\,,
\]
where $F(x_1,\dots,x_k)$ is a \emph{quantifier-free} formula in the language of ordered fields over the reals $\Rb$. More concretely, atomic sub-formulas are
\[
    p(x_1,\dots,x_k)=0
    \quad\text{or}\quad
    p(x_1,\dots,x_k)<0,
\]
where $p\in \Rb[x_1,\dots,x_k]$ is a polynomial over $\Rb$, and arbitrary formulas are built by Boolean connectives.% and quantification. 

\begin{lemma}
\label{lem:double-exponential-patience}
    If there exists a constrained
    %stationary
    Nash equilibrium $\bsigma$, then there exists a constrained
    %stationary
    Nash equilibrium such that the non-zero probabilities are at least $2^{-\tau 2^{O(m|\Pi|)}}$.
\end{lemma}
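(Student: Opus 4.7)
The plan is to encode the existence of a constrained Nash equilibrium as a sentence in the existential theory of the reals, and then invoke the root-size bound of Basu, Pollack, and Roy (\cref{lem:radius-of-etr}) in order to extract an NE whose non-zero probability values are bounded below.

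First, I would build an ETR formula $\phi$ whose satisfying assignments are in bijection with constrained NEs. For each edge $(v,w)$ leaving a player-controlled vertex, introduce a variable $p_{v,w} \in [0,1]$ representing the stationary strategy, together with the constraints $\sum_w p_{v,w} = 1$ at each such $v$. For each vertex $v$ and player $i$, introduce a value variable $u_{v,i}$ playing two roles at once: the payoff of player $i$ starting at $v$ under $\bsigma$, and simultaneously player $i$'s best-response value. At terminals $t$, fix $u_{t,i} = \mu_i(t)$; at stochastic $v$, impose $u_{v,i} = \sum_w \p(v,w) u_{w,i}$; at $v \in V_j$ with $j \neq i$, impose $u_{v,i} = \sum_w p_{v,w} u_{w,i}$; at $v \in V_i$, impose both $u_{v,i} \geq u_{w,i}$ for every neighbour $w$ and the support equation $p_{v,w}(u_{v,i} - u_{w,i}) = 0$. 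The last two conditions jointly force player $i$ to put positive mass only on successors attaining $u_{v,i}$, so that the variable $u_{v,i}$ coincides both with the $\bsigma$-value of $v$ and with player $i$'s optimal response value. Finally, append the threshold constraints $x_i \leq u_{v_0,i} \leq y_i$ for each player $i$.

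The resulting formula uses $O(m + n|\Pi|) = O(m|\Pi|)$ variables, every atomic polynomial has degree at most $2$, and all rational coefficients have bit complexity at most $\tau$. To convert the ball-radius statement of \cref{lem:radius-of-etr} into a \emph{lower} bound on positive probabilities, I case-split on the support: for each candidate $S \subseteq E$, augment $\phi$ with fresh reciprocal variables $q_{v,w}$ for $(v,w) \in S$ subject to $p_{v,w} q_{v,w} = 1$, and $p_{v,w} = 0$ for $(v,w) \notin S$. Since a constrained NE exists by hypothesis, at least one support choice renders the augmented formula satisfiable. For such an $S$ the variable count remains $O(m|\Pi|)$ and the degree remains constant, so \cref{lem:radius-of-etr} yields a satisfying point inside a ball of radius $2^{\tau \cdot 2^{O(m|\Pi|)}}$. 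In particular each $q_{v,w} \leq 2^{\tau \cdot 2^{O(m|\Pi|)}}$, which gives $p_{v,w} \geq 2^{-\tau \cdot 2^{O(m|\Pi|)}}$ on the support, as desired.

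The main technical delicacy is the ETR encoding: I must ensure that the Bellman optimality condition for every player's best-response MDP is captured by $O(1)$ polynomial constraints per vertex--player pair, so that the variable count stays $O(m|\Pi|)$ rather than blowing up due to an explicit quantification over deviating strategies. The twin role played by $u_{v,i}$ (both $\bsigma$-value and best-response value) and the support equation $p_{v,w}(u_{v,i} - u_{w,i}) = 0$ are the devices that achieve this. The reciprocal-variable trick to turn the ball-radius upper bound into a lower bound on positive probabilities is then standard, once one is willing to case-split over the (exponentially many) possible supports in the \emph{existence} proof.
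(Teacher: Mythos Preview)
Your plan is essentially identical to the paper's: encode constrained-NE existence as an ETR sentence with $O(m|\Pi|)$ variables and degree-$2$ polynomials, hardcode the support $S$ of the hypothesised equilibrium, adjoin reciprocal variables forcing $p_{v,w} > 0$ on $S$, and invoke \cref{lem:radius-of-etr} to bound the reciprocals and hence lower-bound the probabilities. The paper writes the Bellman equation $r^i_v = \sum_w p_{vw} r^i_w$ at every non-terminal vertex together with the one-step-improvement inequality $r^i_v \ge r^i_w$ at $v \in V_i$; your complementary-slackness formulation is equivalent.

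One technical point is worth flagging. Your system does not pin down $u_{v,i}$ at vertices from which no terminal is reachable in the support graph, so the linear system for $u_{\cdot,i}$ is underdetermined there, and a satisfying point $(p',u')$ produced by \cref{lem:radius-of-etr} need not have $u'_{v_0,i}$ equal to the actual payoff of player $i$ under $p'$; the threshold constraint $x_i \le u'_{v_0,i} \le y_i$ can then be met spuriously. Concretely, if $v_0 \in V_j$ (with $j \ne i$) has edges to a terminal $a$ with $\mu_i(a)=1$ and to a vertex $b \in V_i$ whose only outgoing edge is a self-loop, then $u'_{b,i}$ is unconstrained, and one can satisfy $p'_{v_0,a} + p'_{v_0,b}\, u'_{b,i} = \tfrac{1}{2}$ with $p'_{v_0,a}$ far from $\tfrac{1}{2}$---yet the true payoff of player $i$ under $p'$ is just $p'_{v_0,a}$. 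The paper closes this gap by hardcoding, alongside $S$, the set $V_S$ of vertices that reach a terminal in the support graph and adding the clause $r^i_v = 0$ for $v \notin V_S$; with that single amendment your argument goes through unchanged.
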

\begin{proof}
    This statement is similar to a result in the work of Hansen, Koucký, and  Miltersen~\cite[Thm. 4]{HKM09}. We write the ETR sentence for solving constrained stationary Nash equilibrium in the following proposition. % below.
    \begin{restatable}{proposition}{ETRSentence}
        There exists an ETR sentence for computing constrained stationary Nash equilibrium with $O(m|\Pi|)$ variables, where all polynomials are of degree at most 2.
    \end{restatable}
        \begin{claimproof}
    The following part is similar to the ETR sentence for constrained stationary Nash equilibria found in the work of Ummels and Wojtczak~\cite[Theorem 4.5]{UW11}. However, we require the explicit bounds on the number of variables, the degree polynomials, and the length of the sentence. Thus, we rewrite the explicit ETR sentence. 
    First, by the assumption, there exists a strategy profile that is a stationary Nash equilibrium $\bsigma$. Let $S$ be the set of all non-zero edges in the strategy profile $\bsigma$. Furthermore, we define the set of vertices $V_S$ from which terminals can be reached with non-zero probability.
    Second, we have a formula that states that the variables $p_{vw}$ indeed describe a strategy. We enforce that the non-zero variables belong to the set $S$.  We further ensure that for stochastic vertices, the variable $p_{vw}$ encodes exactly the value dictated by the probability function $\p$ by the stochastic vertex. 
    This involves satisfying the following constraints:
    \begin{enumerate}
        \item $p_{vw}> 0$; for each $vw\in S$
        \item $p_{vw} \leq 1$; for each $vw\in S$
        \item $p_{vw}= 0$; for each $vw\notin S$
        \item $\sum_{w\in E(v)}p_{vw}=1$; for each non-stochastic vertex $v$
        \item $p_{vw} =  \p(vw)$; for all stochastic vertices $v$.
    \end{enumerate}
    The variables $r_v^i$ correspond to the payoff player $i$ receives by starting from the vertex $v$. The payoffs $r_v^i$ are equal to $0$ for all vertices in $V_S$ because no terminal is reachable from such vertices. 
    The payoffs for other vertices satisfy the following constraints.    
    \begin{enumerate}
  \setcounter{enumi}{5}
        \item $r^i_t =\mu_i(t)$, for $t \in T$;
        \item $ r^i_v = 0$, for $v\notin V_S$;
        \item $r^i_v = \sum_{w\in E(v)}p_{vw}\,r^i_v$, for $v\in V_S\setminus T$;
    \end{enumerate}
    We now check that there is no unilateral deviation for the players by adding the following constraints. This following check is enough because, when fixing the strategies of all players except player $i$, the resulting structure is an MDP. For MDPs, Bellman equations characterise the value of each state and the equation below enforces that the value at a state $v$ is at least as large as the maximum value of any successor, thereby ensuring consistency with the Bellman equations of the induced MDP~\cite[Theorem 10.109]{BK08}. % in Principles of Model Checking by Baier and Katoen),
    \begin{enumerate}
        \setcounter{enumi}{8}
        \item $ r^i_v \ge r^i_w$, for $v \in V_i, w \in E(v)$;
    \end{enumerate}
  %   For computing the values of the MDP generated by fixing the strategy profile $\bsigma_{-i}$, we use variables $m^i_v$ for each player $i$ and each vertex $v$ that satisfy the following constraints 
  %   \begin{enumerate}
  % \setcounter{enumi}{8}
  %       \item $m^i_t =\mu_i(t)$, for $t\in T$;
  %       \item $m_v^i\geq m_w^i$, for $v\in V_i, w\in E(v)$;
  %       \item $m^i_v = \sum_{w\in E(v)} p_{vw}\,m^i_w$, for $v\notin V\setminus V_i$;
  %   \end{enumerate}
    % \[\Psi_S^i(\Bar{p},\Bar{m}^i) \:= 
    %             \bigwedge_{t\in T} \left( m^i =\mu_i'(t) \right) 
    % \land
    %              \bigwedge_{v\in V_i, w\in E(v)} \left(m_v^i\geq m_w^i\right) 
    % \land
    %              \bigwedge_{v\notin V\setminus V_i} \tpl{m^i_w = \sum_{w\in E(v)} p_{vw} m^i_v}.\]
    Finally, we add the auxiliary variables $g_{vw}$ to keep track of the size of $p_{vw}$, which satisfy the following constraint:
    \begin{enumerate}
    \setcounter{enumi}{9}
        \item $g_{vw}\,p_{vw} = 1$, for each $vw \in S$;
    \end{enumerate}
    Thus, our ETR statement is: 
    $\exists p\:\exists r\:\exists g$ that satisfies constraints $1$-$10$, completing the proof. 
    \end{claimproof}
    
    The following result from the work of Basu, Pollack, and Roy~\cite{BPR96} guarantees that if a system of polynomial inequalities is satisfiable, then there is a valuation satisfiable within a ball with radius double-exponentially large with respect to the input size. Before we state the result, we recall that a partition of semi-algebraic subsets of $\mathbb{R}^k$ generated by a set of polynomials $\Pc$ is the collection of all nonempty sets of the form
        $\{x\in\Rb^k \mid \textrm{sign}(p(x))=\sigma(p)\text{ for all }p\in \Pc\}$,
        where  $\sigma:\Pc\to\{-1,0,1\}$ ranges over all possible sign assignments. Intuitively, the partition is obtained by slicing $\Rb^k$ according to which side of each polynomial in $\Pc$ a point lies on (positive, zero, or negative).

    \begin{lemma}[\protect{\cite[Theorem 1.3.5]{BPR96}}]
    \label{lem:radius-of-etr}
        Given a set $\Pc$ of polynomials of degree $d$ in $k$ variables with coefficients in $\Zb$ of bit-sizes $\tau$, then there exists a ball of radius $2^{\tau d^{O(k)}}$ that intersects every part of the partition of semi-algebraic subsets of $\Rb^k$ generated by $\Pc$ (cells of $\Pc$).
    \end{lemma}
    Note that each of the constraints in the ETR sentence is a polynomial of degree at most 2. The total number of variables is $O(m|\Pi|)$. 
    Note that we had introduced auxiliary variables $g_{vw}$ and required that $g_{vw}\,p_{vw} = 1$, for each $vw \in S$. Since this is also a polynomial in $\Pc$, using \Cref{lem:radius-of-etr}, for all $vw \in S$, we have $g_{vw} \le 2^{\tau 2^{O(m|\Pi|)}}$. Consequently, we have $p_{vw} \ge 2^{-\tau 2^{O(m|\Pi|)}}$, which completes the proof. 
\end{proof}

We recall some results used in the work of Frederiksen and Milterson~\cite{FM13} where they adapted arguments that are standard in the context of numerical analysis to provide better algorithms for approximating the values of concurrent reachability games.

% \paragraph*{Floating point numbers}
% We write $\representation{n}$ to denote a representation of  dyadic rationals of the form $x\cdot 2^y$ with a $n$-bit mantissa ($x$ is represented with $n$ bits) that can be represented as a three-tuple $\seq{1^n,\bin{x},\bin{y}}$, that is, $$\representation{n}= \{\seq{1^n,\bin{0},\bin{0}}\}\cup \{\seq{1^n,\bin{x},\bin{y}}\mid x\in\{2^{n-1},2^n-1+1,\dots,2^n-1\},y\in\mathbb{Z}\}.$$
% We often just use the element of $\representation{n}$, say $\seq{1^n,\bin{x},\bin{y}}$, to uniquely represent  the rational number $x2^{-y}$, and $\seq{1^n,\bin{0},\bin{0}}\}$ to represent the value $0$. 
% Further, the operations of addition, division, and multiplication to elements of $\representation{n}$ are extended  $\oplus^n$, $\oslash^n$, and $\otimes^n$ to represent the finite precision analogies of the original operators.
\subp{Floating-point number representation} We define the set of floating-point numbers with precision $\ell$ as 
\begin{align*}
  \Fc(\ell) 
    \defas \left\{ m \cdot 2^{e} \quad \mid \quad m \in \{0, \cdots, 2^\ell-1\}, \quad e \in \Zb \right \} \,.
\end{align*}
The floating-point representation of an element $x = m \cdot 2^e \in \Fc(\ell)$ uses $\bit(m) + \bit(|e|)$ bits. We define the relative distance of two positive real numbers $x, \widetilde{x}$ as 
\[
  \rel(x, \widetilde{x}) 
    \defas \max \left\{ \frac{x}{\widetilde{x}}\,, \frac{\widetilde{x}}{x} \right\} - 1 \,.
\]
Intuitively, $\rel(x,\widetilde{x})$ describes the multiplicative distance. Observe that if the values are closer to $0$, but $\epsilon$ apart for some small $\epsilon$, their relative distance is larger than if these values are much larger and only $\epsilon$ apart.  
We say $x$ is $(\ell, i)$-close to $\widetilde{x}$ if 
    $\rel(x, \widetilde{x}) \le (1 - 2^{1 - \ell})^{-i} - 1$,
where $\ell$ is a positive integer and $i$ is a non-negative integer.
$(\ell,i)$-closeness measures how far two numbers can differ after $i$ steps of rounding in an 
$\ell$-bit floating-point system.
Here, $\ell$ controls precision (smaller gaps between representable numbers), and $i$ allows for cumulative rounding error.
So $x$ is $(\ell,i)$-close to $\widetilde{x}$ if their relative difference is within about 
$i\cdot 2^{1-\ell}$, that is, within $i$ units of machine precision.
\subp{Arithmetic operations} We define $\oplus^{\ell}, \ominus^{\ell}, \otimes^{\ell}, \oslash^{\ell}$ as finite precision arithmetic operations $+, -, *, /$ respectively by truncating the result of the exact arithmetic operation to $\ell$ bits. We drop the superscript $\ell$ if context is clear.

\subp{Floating-point probability distribution representation}
We denote by $\Dc(\ell)$ the set of all floating-point probability distributions with precision $\ell$. 
A probability distribution $\distribution \in \Delta([t])$ belongs to $\Dc(\ell)$ if there exists $w_1, w_2, \cdots, w_t \in \Fc(\ell)$ such that
\begin{itemize}
    \item For all $i \in [t]$, we have $\distribution(i) = \frac{w_i}{\sum_{j \in [t]} w_j}$; and
    \item $\sum_{j \in [t]} w_j$ and 1 are $(\ell, t)$-close.
\end{itemize}
We define the relative distance $\rel$ for probability distributions as
$
    \rel (\distribution, \widetilde{\distribution}) \defas \max \{ \rel(\distribution(i), \widetilde{\distribution}(i)) : i \in [t]\}
$. 
We say $\distribution$ is $(\ell, i)$-close to $\widetilde{\distribution}$ if 
    $\rel(\distribution, \widetilde{\distribution}) \le (1 - 2^{1 - \ell})^{-i} - 1$,
where $\ell$ is a positive integer and $i$ is a non-negative integer.

\begin{lemma}
\label{lem:floating-point-eps-nash}
    If there exists a constrained stationary Nash equilibrium $\bsigma$, then there exists an $(32n^22^{-\ell})$-constrained stationary Nash equilibrium $\bsigma'$ such that for all players $i$ and vertices $v \in V_i$, we have $\bsigma'(v) \in \Dc(\ell)$, where $\ell \ge 1000n^2$.
\end{lemma}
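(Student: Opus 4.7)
The plan is to construct $\bsigma'$ by rounding each distribution $\bsigma(v)$ with $v \in V_i$ (for some player $i$) to a floating-point distribution in $\Dc(\ell)$, and then to control the resulting perturbation of all relevant expectations by appealing to Solan's continuity result on reachability values in Markov chains. Concretely, first I would invoke the (standard in \cite{FM13}) fact that every probability distribution on at most $n$ atoms admits a representative in $\Dc(\ell)$ which is $(\ell,n)$-close to it, and apply it to each $\bsigma(v)$ for $v \in V \setminus V_?$, setting $\bsigma'(v)\defas \bsigma(v)$ on stochastic vertices (whose probabilities are part of the input and need not be perturbed). This defines a stationary profile $\bsigma'$ whose supports are contained in those of $\bsigma$ and whose per-vertex transition distributions are $(\ell,n)$-close, i.e. within relative distance at most $(1-2^{1-\ell})^{-n}-1 \le 2n\cdot 2^{1-\ell}$ for $\ell \gg n$.

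Next, for each player $i$ and each (stationary) strategy $\sigma^\ast_i$ of player $i$, I would compare the Markov chains $\MChain \defas (\bsigma_{-i},\sigma^\ast_i)$ and $\MChain' \defas (\bsigma'_{-i},\sigma^\ast_i)$. These chains differ only at vertices controlled by players $j \neq i$, and on each such vertex their outgoing distributions are $(\ell,n)$-close. Applying the Solan-style continuity statement (referenced as \cref{Result: Continuity of value in MC with reachability}) to the reachability value for each terminal vertex $t$ (or, equivalently, to $\Eb[\mu_i]$ once written as a weighted sum over terminals), I get
\[
\bigl|\Eb(\bsigma_{-i},\sigma^\ast_i)[\mu_i]-\Eb(\bsigma'_{-i},\sigma^\ast_i)[\mu_i]\bigr|\ \le\ c\,n^2\,2^{-\ell},
\]
for an absolute constant $c$ (one factor of $n$ for the dependence of the value on each perturbed distribution, one factor of $n$ for the number of such vertices). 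The exact constant $c$ should come out small enough that, after accounting for the two applications of this inequality below, the total error sits under $32\,n^2\,2^{-\ell}$.

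Now I would derive the $\epsilon$-NE condition. Apply the bound above twice: once with $\sigma^\ast_i \defas \sigma_i$ (the original strategy), which gives $|\Eb(\bsigma)[\mu_i]-\Eb(\bsigma'_{-i},\sigma_i)[\mu_i]| \le c n^2 2^{-\ell}$; and once for an arbitrary deviation $\sigma^\ast_i$, which gives $|\Eb(\bsigma_{-i},\sigma^\ast_i)[\mu_i]-\Eb(\bsigma'_{-i},\sigma^\ast_i)[\mu_i]| \le cn^2 2^{-\ell}$. An extra such estimate is needed to relate $\Eb(\bsigma'_{-i},\sigma_i)[\mu_i]$ to $\Eb(\bsigma')[\mu_i]$ by perturbing the distribution at vertices of $V_i$ from $\sigma_i$ to $\sigma'_i$, absorbing another $cn^2 2^{-\ell}$. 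Chaining the three estimates with the NE inequality $\Eb(\bsigma_{-i},\sigma^\ast_i)[\mu_i]\le \Eb(\bsigma)[\mu_i]$ yields
\[
\Eb(\bsigma'_{-i},\sigma^\ast_i)[\mu_i]\ \le\ \Eb(\bsigma')[\mu_i]+3cn^2 2^{-\ell},
\]
and a similar two-step application gives $x_i-3cn^2 2^{-\ell}\le \Eb(\bsigma')[\mu_i]\le y_i+3cn^2 2^{-\ell}$, so choosing $c$ such that $3c \le 32$ delivers the claimed $32n^2 2^{-\ell}$-constrained $\epsilon$-NE.

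\textbf{Main obstacle.} The delicate point is the quantitative continuity step: Solan's bound is stated for reachability values of Markov chains under additive perturbations of each transition distribution, so I must (i) convert the multiplicative $(\ell,n)$-closeness produced by the floating-point rounding into the additive form Solan expects, using $\ell\ge 1000 n^2$ to ensure $(1-2^{1-\ell})^{-n}-1$ is tiny compared to the smallest positive probabilities in the chain, and (ii) track the constant so that the sum of the three perturbation estimates fits within the budget $32n^2 2^{-\ell}$ rather than a larger polynomial multiple. Once this bookkeeping is done, the rest of the argument is essentially a mechanical triangle-inequality chain.
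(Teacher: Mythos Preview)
Your approach is essentially the same as the paper's: round each player's distribution to $\Dc(\ell)$ using the approximation lemma from \cite{FM13}, then control the change in all relevant expectations via Solan's continuity bound, and chain triangle inequalities against the NE property of $\bsigma$.

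Two small corrections on the bookkeeping. First, the version of Solan's bound used here (\cref{Result: Continuity of value in MC with reachability}) is already stated in terms of the \emph{relative} distance $\epsilon=\max_{v,w}\rel(\p(vw),\widetilde{\p}(vw))$, giving $|\val-\widetilde{\val}|\le 4n\epsilon$; so there is no multiplicative-to-additive conversion to perform. With $(\ell,2n)$-closeness one gets $\epsilon\le 4n\cdot 2^{-\ell}$, hence $c=16$ directly. Second, your three-step decomposition ($\bsigma\to(\bsigma'_{-i},\sigma_i)\to\bsigma'$ plus the deviation comparison) is one step longer than needed: the paper compares $\Eb(\bsigma)[\mu_i]$ to $\Eb(\bsigma')[\mu_i]$ in a \emph{single} application of Solan (perturbing all player-controlled vertices at once), and separately compares $\Eb(\bsigma_{-i},\sigma^\ast_i)[\mu_i]$ to $\Eb(\bsigma'_{-i},\sigma^\ast_i)[\mu_i]$. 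That yields $2\times 16n^2 2^{-\ell}=32n^2 2^{-\ell}$ on the nose; your three applications would give $48n^2 2^{-\ell}$, so the line ``choosing $c$ such that $3c\le 32$'' does not work as written, but collapsing your first and third steps into one fixes it.
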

\begin{proof}
    We first recall the result in the work of Frederiksen and Milterson~\cite{FM13} related to approximation of probability distributions by floating-point distributions. 
    \begin{lemma}[\protect{\cite[Lemma 5]{FM13}}]
    \label{Result: Approximation of prob distribution}
        Consider $x_1, \cdots, x_t \in \Fc(\ell)$. 
        Let $\distribution(i) \defas x_i \oslash \left ( \bigoplus_{j=1}^t x_j \right )$. 
        Then, there exists $\widetilde{\distribution} \in \Dc(\ell)$ such that for all $i$, we have $\widetilde{\distribution}(i) = \distribution(i) / 
     \left (\sum_{j=1}^{t} \distribution(j) \right )$, and $\distribution$ and $\widetilde{\distribution}$ are $(\ell, 2t)$-close. 
    \end{lemma}
     A stationary strategy comprises of probability distributions over actions for each state. We can truncate a strategy by truncating each of these probability distributions. Let the strategy profile $\bsigma^\prime$ be the truncation of the strategy profile $\bsigma$ defined in the above result. Therefore, for all players $i$ and vertices $v \in V_i$, we have that $\bsigma(v)$ and $\bsigma^\prime(v)$ are $(\ell, 2n)$-close. Consequently, for all vertices $v, w \in V$,
    \begin{align}
        \rel(\bsigma(v)(w), \bsigma^\prime(v)(w))
        &\le \frac{1}{(1 - 2^{1 - \ell})^{2n}} - 1 \nonumber \\
        &\le \frac{1}{1 - (2n)2^{1 - \ell}} - 1 & (\text{Bernoulli inequality}) \nonumber \\
        &\le \frac{(2n)2^{1 - \ell}}{1 - (2n)2^{1 - \ell}} & (\text{rearrange}) \nonumber \\
        &\le 4n2^{-\ell} & \left (\ell \ge 1000 n^2 \right )
        \label{eq:rel-bound}
    \end{align}
    We then recall the result in the work of Solan~\cite{Sol03}. This result provides an upper bound on the difference between the reachability values of two Markov chains based on the relative distance of their transition functions.
    \begin{lemma}[\protect{\cite[Thm. 6]{Sol03}}]
    \label{Result: Continuity of value in MC with reachability}
        Consider two MCs $\MChain$ and $\widetilde{\MChain}$ with identical vertex sets and a target set $T$. Let $\mu$ be a reward function, where $\mu(v) = 1$ for all vertices $v \in T$ and $\mu(v) = 0$ for all vertices $v \in V \setminus T$.
        We denote by $\val$ and $\widetilde{\val}$ the expected payoff of $\MChain$ and $\widetilde{\MChain}$ respectively. 
        Fix $\epsilon \defas \max_{v, w \in V} \rel(\p(vw), \widetilde{\p}(vw))$. Then, we have
        \[
            |\val - \widetilde{\val}| \le 4 n \epsilon \,.
        \]
    \end{lemma}
    By \Cref{eq:rel-bound} and \Cref{Result: Continuity of value in MC with reachability}, we bound the the difference of payoffs for $\bsigma$ and $\bsigma^\prime$. For all players $i$, we have $|\mathbb{E}(\bsigma)[\mu_i] - \mathbb{E}(\bsigma^\prime)[\mu_i]| \le 16n^22^{-\ell}$. Since the strategy profile $\bsigma$ is a constrained Nash equilibrium, $\bsigma^\prime$ is a $(32n^22^{-\ell})$-constrained Nash equilibrium, which completes the proof.
\end{proof}

In the above result, we show that it is sufficient to consider floating-point representable strategies to solve the problem. We now present the $\NP^\NP$ procedure. We first recall the result in the work of Frederiksen and Milterson~\cite{FM13} related to computing the approximate value of MCs with floating-point distributions.

\begin{lemma}[\protect{\cite[Thm. 4]{FM13}}]
\label{Result: Approximate reachability value}\label{}
    Consider an MC $\MChain$ and a target set $T$. Let $\mu$ be a reward function, where $\mu(v) = 1$ for all vertices $v \in T$ and $\mu(v) = 0$ for all vertices $v \in V \setminus T$.
    For all vertices $v \in V$, we have $\p(v, .) \in \Dc(\ell)$ where $\ell \ge 1000 n^2$. 
    Then, there exists a polynomial-time algorithm that for all vertices $v \in V$, computes an approximation $r \in \Fc(\ell)$ for $\mathbb{E}[\mu]$ such that 
    \[
        |r - \mathbb{E}[\mu]| \le 80 n^4 2^{-\ell}\,,
    \]
    where $\val_T(v)$ is the reachability value for the vertex $v$.
\end{lemma}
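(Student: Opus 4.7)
The plan is to reduce the task to solving the linear system characterising reachability probabilities in the chain, and then to perform Gaussian elimination in finite precision with a careful backward-error analysis. In polynomial time by graph search, identify the set $V_R \subseteq V$ of vertices that reach $T$ with positive probability: vertices outside $V_R$ have $\mathbb{E}[\mu] = 0$, and vertices in $T$ have $\mathbb{E}[\mu] = 1$. For $v \in V_R \setminus T$, the values $x_v \defas \mathbb{E}_v[\mu]$ form the unique solution to $(I - A)\mathbf{x} = \mathbf{b}$, where $A_{uw} = \p(uw)$ for $u, w \in V_R \setminus T$ and $b_u = \sum_{w \in T} \p(uw)$. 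Since every state in $V_R$ reaches $T$ almost surely, $I - A$ is a nonsingular M-matrix with nonnegative inverse.

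Next I would solve this linear system using LU decomposition followed by back-substitution, executed entirely with the truncating operations $\oplus^{\ell}, \ominus^{\ell}, \otimes^{\ell}, \oslash^{\ell}$. Each such operation is $(\ell, 1)$-close to its exact counterpart, and the whole procedure uses $O(n^3)$ arithmetic operations. A key structural observation is that because $I - A$ is an M-matrix, all Schur complements arising during elimination are again M-matrices; consequently, pivots stay bounded below by the corresponding diagonal entries and no pivoting strategy is required, which prevents catastrophic cancellation in the divide step $\oslash^{\ell}$.

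The main obstacle is controlling how the floating-point error accumulates through the $O(n^3)$ operations. A Wilkinson-style backward-error analysis, adapted to the arithmetic model $\Fc(\ell)$ in which each basic operation contributes a multiplicative factor in $[1 - 2^{1-\ell}, 1 + 2^{1-\ell}]$, should yield a computed vector $\widetilde{\mathbf{x}}$ that is $(\ell, O(n^3))$-close to the exact solution in each coordinate, i.e., $\widetilde{\mathbf{x}}$ solves $(I - A + E)\widetilde{\mathbf{x}} = \mathbf{b}$ for a perturbation $E$ of $\infty$-norm at most $O(n^2 \cdot 2^{-\ell})$. Converting this relative closeness into an absolute error---using that the exact values $x_v$ lie in $[0, 1]$ and that the M-matrix structure bounds $\|(I - A)^{-1}\|_\infty$ in a controlled way---gives a bound of the form $O(n^k) \cdot 2^{-\ell}$ on $|r - \mathbb{E}[\mu]|$ for some small constant $k$. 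Tracking constants carefully through the $O(n^3)$ elimination steps and the $O(n)$ back-substitution steps, and exploiting the precision hypothesis $\ell \ge 1000 n^2$ to ensure that intermediate quantities retain enough relative precision, yields the stated bound $80 n^4 2^{-\ell}$.
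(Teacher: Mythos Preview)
The paper does not prove this lemma; it is quoted as a black box from \cite[Thm.~4]{FM13}, so there is nothing in the paper itself to compare your argument against.

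On the substance of your sketch: the high-level plan (set up the linear system for reachability and solve it in precision-$\ell$ arithmetic) is right, and is essentially what \cite{FM13} do. But two of your structural claims are incorrect and together constitute a real gap. First, in Gaussian elimination on an M-matrix the pivots are \emph{not} bounded below by the original diagonal entries: the diagonal update $M_{jj}\leftarrow M_{jj}-(M_{ji}/M_{ii})M_{ij}$ subtracts a nonnegative quantity, so pivots can only shrink. Second, and more seriously, the assertion that ``the M-matrix structure bounds $\|(I-A)^{-1}\|_\infty$ in a controlled way'' is false. The entries of $(I-A)^{-1}$ are expected numbers of visits before absorption, and these can be exponential in $n$ even when all transition probabilities are bounded away from $0$ and $1$ (a biased random walk on a path already exhibits this). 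Any forward-error bound that routes through this condition number cannot deliver the polynomial $O(n^4\,2^{-\ell})$.

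The argument in \cite{FM13} sidesteps the condition number entirely by organising the computation as \emph{state elimination} on the chain rather than LU on $I-A$: eliminating state $i$ replaces $A_{jk}$ by $A_{jk}+A_{ji}(1-A_{ii})^{-1}A_{ik}$ (and similarly for $\mathbf{b}$), with the leaving probability $1-A_{ii}$ obtained as the sum of the other outgoing probabilities. Every floating-point step is then a $\oplus$, $\otimes$, or $\oslash$ of nonnegative numbers, so there is no cancellation; each of the $O(n^3)$ operations degrades \emph{relative} error by a single $(\ell,1)$ factor, and since the final reachability values lie in $[0,1]$, the accumulated relative error converts directly into the stated absolute bound without ever touching $\|(I-A)^{-1}\|_\infty$. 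Your formulation via LU on $I-A$ re-introduces genuine subtractions on the diagonal, which is exactly where the condition number would leak back in.
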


\begin{lemma}
\label{Result: Approximate reachability value in MDPs}\label{Result:Approximate_MDPs}
    The problem of deciding if the payoff for MDPs is below a threshold up to an additive error is in $\coNP$ where the input is a MDP $\MDProc$, a reward function $\mu$, a vertex $v$, a threshold $0 \le \alpha \le 1$, an additive error $\epsilon = 2^{-\kappa}$ and a positive integer $\ell$ such that, for all stochastic vertices $w \in V$, we have 
    \[
        \p(w, .) \in D(\ell), \quad \ell \ge 1000 n^2 + \kappa \,.
    \]
    Define $\val \defas \sup_{\sigma} \mathbb{E}(\sigma)[\mu]$. Note that the numbers $\alpha$ and $\epsilon$ are represented in fixed-point binary and the NP procedure is such that 
    \begin{itemize}
        \item If $\alpha \le \val - \epsilon$, then it outputs \emph{no}; and
        \item If $\alpha \ge \val + \epsilon$, then it outputs \emph{yes}.
    \end{itemize}
\end{lemma}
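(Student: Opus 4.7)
The plan is to show that the complement of the problem---deciding whether $\val$ is \emph{above} the threshold up to the additive error---lies in $\NP$, from which $\coNP$ membership follows. The intended $\NP$ certificate is a pure stationary strategy $\sigma$ for the single player: because the payoff depends only on which terminal is eventually reached, this is a weighted reachability objective, and for such MDPs the supremum $\val = \sup_\sigma \Eb(\sigma)[\mu]$ is attained by some pure stationary strategy (a classical result). Such a $\sigma$ is just one outgoing edge per player vertex and hence has polynomial size.

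Given a guessed $\sigma$, fixing it in $\MDProc$ yields a Markov chain $\MChain_\sigma$ whose transition at each player vertex is a Dirac mass (trivially in $\Dc(\ell)$) and whose transition at each stochastic vertex is already in $\Dc(\ell)$ by hypothesis. For each terminal $t \in T$, I would invoke \Cref{Result: Approximate reachability value} with the $0/1$ reward indicating $t$, obtaining in polynomial time a value $r_t \in \Fc(\ell)$ with $|r_t - \prob_{\MChain_\sigma}[\text{reach } t]| \le 80 n^4 2^{-\ell}$. Setting $\widetilde V \defas \sum_{t \in T} \mu(t) \cdot r_t$, by linearity and using $\mu(t) \le 1$, we obtain $|\widetilde V - \Eb(\sigma)[\mu]| \le |T| \cdot 80 n^4 2^{-\ell} \le 80 n^5 2^{-\ell}$. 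Since $\ell \ge 1000 n^2 + \kappa$, this is bounded by $2^{-\kappa-1} = \epsilon/2$. The verifier accepts iff $\widetilde V > \alpha$.

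For correctness, if $\alpha \le \val - \epsilon$ then a pure stationary optimal $\sigma$ satisfies $\Eb(\sigma)[\mu] = \val \ge \alpha + \epsilon$, whence $\widetilde V \ge \val - \epsilon/2 > \alpha$ and this certificate is accepting. If instead $\alpha \ge \val + \epsilon$, every guessed $\sigma$ yields $\Eb(\sigma)[\mu] \le \val \le \alpha - \epsilon$, so $\widetilde V \le \Eb(\sigma)[\mu] + \epsilon/2 < \alpha$ and no certificate is accepting. Complementing the acceptance condition then delivers the desired $\coNP$ procedure for the original problem.

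The main obstacle I foresee is controlling the cumulative approximation error. The per-terminal guarantee from \Cref{Result: Approximate reachability value} already costs a factor $n^4$, summing over the (at most $n$) terminals costs a further factor of $n$, and the finite-precision multiplication by the rational rewards $\mu(t)$ together with the finite-precision comparison to $\alpha$ contribute additional $\Fc(\ell)$ rounding that must be tracked explicitly. The choice $\ell \ge 1000 n^2 + \kappa$ comfortably absorbs all of these, but one must verify that every intermediate arithmetic operation is performed in $\Fc(\ell)$ (using $\oplus, \otimes, \oslash$) and that the accumulated rounding remains dominated by $\epsilon/2$, so that the strict inequality against $\alpha$ is decided correctly in both the completeness and soundness cases.
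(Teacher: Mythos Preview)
Your proposal is correct and follows essentially the same approach as the paper: guess a pure stationary strategy (optimal ones exist by classical results, which the paper cites as \cite{FV97}), fix it to obtain a Markov chain with all transitions in $\Dc(\ell)$, approximate the resulting value via \Cref{Result: Approximate reachability value}, and compare to $\alpha$. Your explicit decomposition $\widetilde V = \sum_{t\in T}\mu(t)\,r_t$ and the accompanying $\epsilon/2$ error budget are a slight refinement of a step the paper leaves implicit (it invokes \Cref{Result: Approximate reachability value} directly for the weighted reward), but this is not a genuinely different route.
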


\begin{proof}
    We first present the $\coNP$ procedure and then prove its soundness and completeness.

    \smallskip\noindent{\em Procedure.} 
    The procedure guesses a pure stationary strategy $\sigma$ for the player. 
    Note that the size of the representation of a pure stationary strategy is polynomial with respect to the size of the representation of $\MDProc$. 
    By fixing $\sigma$, we obtain an MC $\MChain$. 
    We denote by $\val_\sigma$ the value $\mathbb{E}(\sigma)[\mu]$. 
    By \Cref{Result: Approximate reachability value}, there exists a polynomial time algorithm that computes an $\epsilon$-approximation $\widehat{\val}_\sigma$ of $\val_\sigma$. 
    Our procedure outputs \emph{no} if $\alpha \le \widehat{\val}_\sigma$. 
    If there exists no such pure stationary strategy, the procedure outputs \emph{yes}.
    
    \smallskip\noindent{\em Completeness.} 
    If $\alpha \le \val - \epsilon$, then, by \cite{FV97}, there exists a pure stationary strategy $\sigma$ such that $\val = \val_\sigma$. 
    The procedure non-deterministically guesses $\sigma$. 
    By \Cref{Result: Approximate reachability value}, we have $\widehat{\val}_\sigma + \epsilon \ge \val_\sigma$. 
    Therefore, we have $\alpha \le \widehat{\val}_\sigma$, and the procedure outputs \emph{no}.

    \smallskip\noindent{\em Soundness.} 
    If $\alpha \ge \val + \epsilon$, then for all pure stationary strategies $\sigma$, we have $\alpha \ge \val_\sigma + \epsilon$. 
    By \Cref{Result: Approximate reachability value}, we have $\widehat{\val}_\sigma - \epsilon \le \val_\sigma$. 
    Therefore, $\alpha \ge \widehat{\val}_\sigma$ which implies that the procedure outputs \emph{yes} and yields the result.
\end{proof}

\begin{proof}[Proof of \Cref{thm:np-np-upper-bound}]
    We first present the procedure and then prove its soundness and completeness.

    \smallskip\noindent{\em Procedure.} Let $\ell$ be sufficiently large. The procedure guesses a  strategy profile and verifies that it is an $\epsilon/8$-constrained stationary Nash equilibrium $\bsigma$. 
    By \Cref{lem:double-exponential-patience} and \Cref{lem:floating-point-eps-nash}, the size of the representation of $\bsigma$ is polynomial with respect to the size of game and $\bit(\epsilon)$. By fixing the strategy profile $\bsigma$, we obtain an MC $\MChain$. We denote by $\val_i$ the payoff of $\MChain$ where the reward function is $\mu_i$. By \Cref{Result: Continuity of value in MC with reachability}, the procedure computes the $\epsilon/8$-approximate payoff $\widehat{\val}_i$ in polynomial time. For each player $i$, by fixing the strategy profile $\bsigma_{-i}$, we obtain an MDP $\MDProc_i$. We denote by $\widetilde{val}_i$ the value of $\MDProc_i$ where the reward function is $\mu_i$. Then, the procedure checks if there exists an $\epsilon$-unilateral deviation for player $i$ by deciding if $\widetilde{\val}_i$ for is at most $\widehat{\val}_i(v) + 3/4\epsilon$ up to additive error $\epsilon/8$. The procedure finally checks if the payoff constraints are satisfied.
    
    \smallskip\noindent{\em Soundness.} 
    We assume that $\bsigma$ is not an $\epsilon$-constrained Nash equilibrium, i.e., there exists a player $i$ such that $\epsilon$-unilateral deviation is possible. Equivalently, we have $\val_i \le \widetilde{\val}_i - \epsilon$. Therefore, we have 
    \begin{align*}
        \widehat{\val}_i + 3/4\epsilon &\le \val_i + 7/8\epsilon & \left(\widehat{\val}_i \text{ is an $\epsilon/8$-approximation of } \val_i \right )\\
        &\le \widetilde{\val}_i - \epsilon/8 \,. &(\text{$\epsilon$-unilateral deviation})
    \end{align*}
    However, for this case, the $\coNP$ procedure defined in \Cref{Result: Approximate reachability value in MDPs}, successfully outputs \emph{no}. Therefore, our procedure does not output $\bsigma$ as $\epsilon$-constrained stationary Nash equilibrium, which yields the soundness of our procedure.

    \smallskip\noindent{\em Completeness.} By \Cref{lem:double-exponential-patience} and \Cref{lem:floating-point-eps-nash}, there exists an $\epsilon/8$-constrained stationary Nash equilibrium $\bsigma$ which is polynomial-size representable. 
    The procedure non-deterministically guesses $\bsigma$. 
    Since the strategy profile is $\epsilon/8$-Nash equilibrium, for all players $i$, we have $\widetilde{\val}_i \le \val_i + \epsilon/8$. Therefore, we get
    \begin{align*}
        \widehat{\val}_i + 3/4\epsilon &\ge \val_i + 5/8\epsilon & \left(\widehat{\val}_i \text{ is an $\epsilon/8$-approximation of } \val_i \right )\\
        &\ge \widetilde{\val}_i + 4/8\epsilon &(\widetilde{\val}_i \le \val_i + \epsilon/8)\\
        &\ge \widetilde{\val}_i + \epsilon/8 \,.
    \end{align*}
    Thus, the $\coNP$ procedure outputs \emph{yes}. Therefore, our procedure successfully decides that $\bsigma$ is a $\epsilon$-constrained stationary Nash equilibrium. This yields the completeness of the procedure and completes the proof.
\end{proof}

\section{About (very) small probabilities }\label{sec:smallprobs}
We now provide a tight lower bound for \cref{lem:double-exponential-patience}, by showing that in a constrained NE, players might need to use double-exponentially small probabilities.
This result itself is not surprising due to the $\exists\Rb$-completeness of the problem~\cite{HS20} even in games with $7$ players or more.
But we tighten this result by producing an explicit game  with only $5$ players where such double-exponentially small probabilities are required.
\begin{theorem}\label{thm:Gn}
    Let $n \in \Nb$.
    There exist two threshold vectors $\bx$ and $\by$ and a game $\Game$ where there is a stationary Nash equilibrium $\bsigma$ satisfying $\bx \leq \Eb(\bsigma) \leq \by$, but where all such stationary Nash equilibria contain probabilities that are double-exponentially small in $n$.
\end{theorem}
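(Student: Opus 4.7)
\medskip\noindent\textbf{Proof plan.}
The plan is to build, for each $n$, an explicit game $\Game_n$ with five players, together with threshold vectors $\bx, \by$, such that any stationary NE satisfying the constraints assigns probability at most $2^{-2^{\Omega(n)}}$ to some edge. The construction will chain $n$ copies of a single small ``squaring gadget'' so that the characteristic probability produced by the $i$-th gadget is the square of the one produced by the $(i-1)$-th, yielding a $2^n$-fold iterated squaring and hence a doubly-exponentially small value.

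\medskip\noindent\textbf{Step 1: the squaring gadget.} First I would design a constant-size gadget $\Hc$ with a designated ``input'' edge of probability $p$ and a designated ``output'' edge forced to have probability approximately $p^2$ in any constrained NE. The mechanism is the standard Nash indifference trick: one player is made indifferent between two outgoing edges only if another player chooses some probability proportional to $p$; asking two such indifference conditions to hold simultaneously in one gadget produces a bilinear equation of the form $q = p \cdot p'$, which the threshold constraints then collapse to $q = p^2$ by enforcing $p' = p$ through equal-payoff constraints on terminal vertices. I would pick the roles carefully so that $\Hc$ only uses a handful of vertices per player, but the same five players across all gadgets, since stationary strategies are purely local to vertices and no player's distribution at a vertex in one gadget affects their distribution at a vertex in another.

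\medskip\noindent\textbf{Step 2: chaining.} Next I would concatenate $n$ copies $\Hc_1, \dots, \Hc_n$, identifying the output edge of $\Hc_i$ with the input edge of $\Hc_{i+1}$, and fix the input of $\Hc_1$ to a constant, say $p_0 = 1/2$, using one additional threshold on the first player's payoff. By induction the only probability $p_i$ satisfying the indifference conditions of $\Hc_i$ given input $p_{i-1}$ is $p_i = p_{i-1}^2$, so $p_n = 2^{-2^n}$. Adding a single initial vertex controlled by a dedicated player whose payoff interval $[x_0, y_0]$ forces entering the chain then yields the whole game $\Game_n$.

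\medskip\noindent\textbf{Step 3: correctness of the lower bound.} Finally I need to verify two things: (i) at least one constrained stationary NE exists (the profile implementing $p_i = 2^{-2^i}$ at each gadget, which can be checked directly by computing expected payoffs terminal by terminal), and (ii) every constrained stationary NE must use a doubly-exponentially small probability somewhere. For (ii) I would argue that any NE must respect the indifference equations of each gadget up to the zero/non-zero support of the strategies; then a case analysis on the supports shows that the only supports consistent with the threshold vectors $\bx, \by$ are the ones forcing the full chain of squarings to be executed, so the cumulative $2^n$-th power cannot be avoided.

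\medskip\noindent\textbf{Main obstacle.} The nontrivial part is the simultaneous minimisation of the number of players \emph{and} the gadget's size, since the natural way of implementing an indifference-based squaring uses a separate player for each ``matching-pennies'' role. Shaving this down to five players will require reusing the same player in several distinct roles across the gadget (at vertices where the player's own payoff is shielded from interference by auxiliary stochastic vertices), and ruling out ``degenerate'' NEs that escape the squaring chain by placing zero probability on critical edges. Once the supports of constrained NEs are pinned down by the threshold vectors, the doubly-exponential lower bound follows from the explicit quadratic recursion.
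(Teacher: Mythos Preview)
Your plan is correct and matches the paper's approach almost exactly: the paper also builds a five-player game out of $n$ chained ``squaring'' gadgets $\mul_i$ (together with auxiliary $\sfdef_i$ gadgets), uses a single indifference condition for player $\Diamond$ at a vertex $c_i$ to force $\alpha_i = \alpha_{i-1}^2$, anchors $\alpha_0 = \tfrac12$ via a stochastic vertex, and then separately checks existence and the implication ``constrained NE $\Rightarrow$ squaring recursion holds'' (Lemmas~\ref{lem:double_exp_NE_exists} and~\ref{lem:if_NE_then_double_exp}). The only notable implementation detail you leave implicit is that the paper drives the whole cascade with a \emph{single} threshold constraint (player $\Circle$'s payoff equals~$1$), which forces all ``escape'' terminals to be reached with probability~$0$ and thereby pins down every support; your case analysis on supports will amount to exactly this observation.
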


Let $n \in \Nb$.
The game $\Game^n$ is depicted by \Cref{fig:small_probas}.
It contains five players, named $\Circle$, $\Box$, $\triangle$, $\Diamond$, and $\pentagon$: the shape of each vertex indicates which player controls it, and black vertices are stochastic ones---the probability distribution is then indicated on the outgoing edges.
All omitted rewards are $0$.
For convenience, we allow rewards greater than $1$ here, accounting for the fact that all rewards can easily be normalised by dividing them by the greatest of them (namely $n$).
For now, the reader should ignore the notations $\beta_i$, $\gamma_i$, $\delta_i$, $\eta_i$.

Intuitively, the construction is the following: in each gadget $\sfdef_i$, from the vertex $r_i$, player $\Circle$ has to randomise between the terminal vertices $t_i$ and $t_i'$ (in the gadget $\sfdef_0$ the randomisation is imposed, because player $\Circle$ does not control the vertex $r_0$).
The probability of going to the vertex $t_i$ will be called $\alpha_i$.
We are interested in NEs such that player $\Circle$ gets expected payoff $1$: that means that the players should never go to those gadgets, nor to the terminal vertices $t_{i0}$ or $t_{i1}$.
But then, deviating and going to those gadgets or vertices must never be a profitable deviation. This imposes strict restrictions on which expected payoffs the players should get in each gadget $\mul_i$, and therefore on how they can randomise their own strategies.
In particular, from the vertex $c_i$, we will see that player $\Diamond$ must necessarily randomise, which is possible in a Nash equilibrium only if both sides are equivalent from her perspective: with this idea, each gadget $\mul_i$ binds the values of $\alpha_{i-1}$ and $\alpha_i$, imposing $\alpha_i = \alpha_{i-1}^2$.
Then, by induction, we find $\alpha_n = \frac{1}{2^{2^n}}$.

\begin{figure}[h]
	\begin{subfigure}[b]{0.4\textwidth}
	    \centering
		\begin{tikzpicture}
    		\node[stoch, initial above={}] (s0) {$s_0$};
    		\node[below of=s0] (d) {\dots};
            \node[left of=d, gadget] (mul2) {$\mul_2$};
            \node[left of=mul2, gadget] (mul1) {$\mul_1$};
            \node[right of=d, gadget] (muln) {$\mul_n$};
    
            \path[->, bend right] (s0) edge node[above left] {$\frac{1}{n}$} (mul1);
            \path[->, bend right] (s0) edge node[below right] {$\frac{1}{n}$} (mul2);
            \path[->, bend left] (s0) edge node[above right] {$\frac{1}{n}$} (muln);
		\end{tikzpicture}
		\caption{The game $\Game^n$}
		\label{fig:smalle_probas_complete}
	\end{subfigure}
	\begin{subfigure}[b]{0.25\textwidth}
	    \centering
		\begin{tikzpicture}
		      \node[stoch, initial above, initial text={}] (v0) {$r_0$};
            \node[below left of=v0, scale=0.6] (t) {$t_i:~\stackrel{\square}{1}~\stackrel{\smalltriangle}{(n-1)}~\stackrel{\diamond}{1}$};
            \node[below right of=v0, scale=0.6] (t') {$t_i':~\stackrel{\smalltriangle}{n}~\stackrel{\smallpentagon}{1}$};

            \path[->] (v0) edge node[above left] {$\frac{1}{2}$} (t);
            \path[->] (v0) edge node[above right] {$\frac{1}{2}$} (t');
		\end{tikzpicture}
	    \caption{The gadget $\sfdef_0$} \label{fig:small_probas_def0}
	\end{subfigure}
	\begin{subfigure}[b]{0.25\textwidth}
	    \centering
		\begin{tikzpicture}[scale=0.7]
		      \node[circlev, initial above, initial text={}] (vi) {$r_i$};
            \node[below left of=vi, scale=0.6] (t) {$t_i:~\stackrel{\square}{1}~\stackrel{\smalltriangle}{(n-i-1)}~\stackrel{\diamond}{1}$};
            \node[below right of=vi, scale=0.6] (t') {$t_i':~\stackrel{\smalltriangle}{(n-i)}~\stackrel{\smallpentagon}{1}$};
            \path[->] (vi) edge node[above left, gray] {$\alpha_i$} (t);
            \path[->] (vi) edge (t');
		\end{tikzpicture}
	    \caption{The gadget $\sfdef_i$} \label{fig:small_probas_defi}
	\end{subfigure}
	\begin{subfigure}[b]{\textwidth}
	    \centering
		\begin{tikzpicture}[scale=1,->]
		      \node[trianglev, initial left] (a) {$a_i$};
            \node[pentagonv, right of=a] (b) {$b_i$};
            \node[diamondv, right of=b] (c) {$c_i$};
            \node[squarev, above of=c] (d) {$d_i$};
            \node[gadget, above right of=d] (defi) {$\sfdef_i$};
            \node[trianglev, below right of=defi] (e) {$e_i$};
            \node[circlev, right of=e] (f) {$f_i$};
            \node[squarev, below of=c] (g) {$g_i$};
            \node[trianglev, right of=g] (h) {$h_i$};
            \node[circlev, above right of=h] (j) {$j_i$};
            \node[diamondv, below right of=j] (k) {$k_i$};
            \node[pentagonv, right of=k] (l) {$\ell_i$};
            \node[gadget, below right of=h] (defi-1) {$\sfdef_{i-1}$};
            \node[circlev, right of=l] (m) {$m_i$};
            \node[below of=a, scale=0.7] (t0) {$t_{i0}:~\stackrel{\smalltriangle}{n-i+\frac{1}{8}}$};
            \node[below of=b, scale=0.7] (t1) {$t_{i1}:~\stackrel{\smallpentagon}{\frac{11}{8}}$};
            \node[above right of=f, scale=0.7] (t2){$t_{i2}:~\stackrel{\circ}{1}~\stackrel{\square}{1}~\stackrel{\smalltriangle}{(n-i-1)}~\stackrel{\diamond}{1}~\stackrel{\smallpentagon}{2}$};
            \node[below right of=f, scale=0.7] (t3){$t_{i_3}:~\stackrel{\circ}{1}~\stackrel{\smalltriangle}{(n-i)}~\stackrel{\smallpentagon}{2}$};
            \node[above right of=m, scale=0.7] (t4){$t_{i4}:~\stackrel{\circ}{1}~\stackrel{\square}{1}~\stackrel{\smalltriangle}{(n-i)}~\stackrel{\diamond}{1}$};
            \node[below right of=m, scale=0.7] (t5){$t_{i5}:~\stackrel{\circ}{1}~\stackrel{\square}{1}~\stackrel{\smalltriangle}{(n-i)}~\stackrel{\smallpentagon}{1}$};
            \node[above left of=j, scale=0.7] (t6) {$t_{i6}:~\stackrel{\circ}{1}~\stackrel{\smalltriangle}{(n-i+1)}~\stackrel{\smallpentagon}{1}$};

            \path (a) edge (b);
            \path (a) edge (t0);
            \path (b) edge (c);
            \path (b) edge (t1);
            \path (c) edge node[gray, left] {$\beta_i$} (d);
            \path (d) edge (defi);
            \path (d) edge (e);
            \path (e) edge (defi);
            \path (e) edge (f);
            \path (f) edge node[gray, below right] {$\gamma_i$} (t2);
            \path (f) edge (t3);
            \path (c) edge (g);
            \path (g) edge (h);
            \path (h) edge (j);
            \path (j) edge node[gray, above right] {$\delta_i$} (k);
            \path (j) edge (t6);
            \path (k) edge (l);
            \path (g) edge[bend right] (defi-1);
            \path (h) edge (defi-1);
            \path (k) edge (defi-1);
            \path (l) edge[bend left] (defi-1);
            \path (l) edge (m);
            \path (m) edge node[gray, below right] {$\eta_i$} (t4);
            \path (m) edge (t5);
		\end{tikzpicture}
	    \caption{The gadget $\mul_i$} \label{fig:small_probas_muli}
	\end{subfigure}
    \caption{A game where very small probabilities are necessary. The rewards corresponding to all the players not explicitly mentioned are zero.} \label{fig:small_probas}
    \end{figure}
\begin{restatable}{lemma}{doubleexpsmall}\label{lem:doubleexpsmall} \label{lem:if_NE_then_double_exp}
%[App.~\ref{app:smallprobs}]
    In the game $\Game^n$, if there exists a stationary Nash equilibrium $\bsigma$ in which player $\Circle$ gets expected payoff $1$, then we have $\sigma_\circ(r_n) = \frac{1}{2^{2^n}}$.
\end{restatable}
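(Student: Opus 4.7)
The plan is to prove, by induction on $i \in \{0, 1, \ldots, n\}$, that the probability $\alpha_i := \sigma_\circ(r_i)(t_i)$ by which play exits $\sfdef_i$ toward $t_i$ satisfies $\alpha_i = 1/2^{2^i}$; the statement of the lemma is the case $i = n$. The base case $\alpha_0 = 1/2$ is immediate from the stochastic vertex $r_0$. For the inductive step, I would analyse the gadget $\mul_i$ under the inductive hypothesis $\alpha_{i-1} \in (0,1)$ and derive the recurrence $\alpha_i = \alpha_{i-1}^2$.

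The starting observation is that since player $\Circle$ earns expected payoff $1$ (the maximum possible), every terminal reached with positive probability must give her payoff $1$; in particular, no play enters any $\sfdef$-gadget nor $t_{i0}$ nor $t_{i1}$. Consequently, at every vertex reached with positive probability that has an edge into such a ``bad'' terminal, the controlling player assigns probability $0$ to that edge, so her strategy there is pure, and her NE condition reduces to the inequality ``continuation payoff $\geq$ deviation payoff''. I would tabulate once and for all the payoffs obtained by entering $\sfdef_j$ (namely $\alpha_j$ for $\square$ and $\Diamond$, $n-j-\alpha_j$ for $\triangle$, and $1-\alpha_j$ for $\pentagon$), after which all Nash inequalities at $d_i, e_i, g_i, h_i, k_i, \ell_i$ take a simple linear form in $\gamma_i, \delta_i, \eta_i$.

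The heart of the argument is a case analysis on $\beta_i := \sigma_\Diamond(c_i)(d_i)$. If $\beta_i = 1$, the pure constraints at $d_i$ (from $\square$) and $e_i$ (from $\triangle$) sandwich $\gamma_i$ between $\alpha_i$ and $\alpha_i$, forcing $\gamma_i = \alpha_i$; but then $\triangle$'s continuation payoff at $a_i$, namely $n-i-\alpha_i$, is strictly smaller than her deviation payoff $n-i+\tfrac{1}{8}$, a contradiction. Symmetrically, if $\beta_i = 0$, the pure constraints at $g_i, h_i, k_i, \ell_i$ force $\delta_i = \eta_i = \alpha_{i-1}$, after which $\pentagon$'s continuation payoff at $b_i$ equals $1 - \alpha_{i-1}^2$, strictly smaller than $\tfrac{11}{8}$, again a contradiction. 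Hence $\beta_i \in (0,1)$ and both branches out of $c_i$ are traversed with positive probability.

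With both branches active, the pure NE inequalities at $d_i, e_i$ yield $\gamma_i = \alpha_i$, while those at $g_i, h_i, k_i, \ell_i$ yield $\delta_i = \eta_i = \alpha_{i-1}$. Finally, since $\Diamond$ strictly mixes at $c_i$, she must be indifferent between her two actions: her expected payoff on the upper branch is $\gamma_i$ (the probability of reaching $t_{i2}$, where she earns $1$), while on the lower branch it is $\delta_i \eta_i$ (the probability of reaching $t_{i4}$, where she earns $1$); equating these gives $\alpha_i = \gamma_i = \delta_i \eta_i = \alpha_{i-1}^2$, completing the induction. Iterating from $\alpha_0 = 1/2$ yields $\alpha_n = 1/2^{2^n}$. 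The main obstacle is the rigorous dismissal of the two boundary cases $\beta_i \in \{0,1\}$, which is where the carefully chosen constants $\tfrac{1}{8}$ at $t_{i0}$ and $\tfrac{11}{8}$ at $t_{i1}$ do their work; once these are excluded, the remaining reasoning is routine linear algebra on the NE inequalities.
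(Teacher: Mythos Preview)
Your proposal is correct and follows essentially the same approach as the paper: induction on $i$ to establish $\alpha_i = 1/2^{2^i}$, using the payoff-$1$ constraint for $\Circle$ to force the relevant players through $\mul_i$, then sandwiching $\gamma_i,\delta_i,\eta_i$ via the NE inequalities, and finally using $\Diamond$'s indifference at $c_i$ to obtain $\alpha_i = \alpha_{i-1}^2$.

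The one stylistic difference worth noting concerns how you rule out $\beta_i\in\{0,1\}$. The paper does this directly and more cheaply: it observes that every terminal reachable from $d_i$ gives $\triangle$ at most $n-i$, and every terminal reachable from $g_i$ gives $\pentagon$ at most $1$, so $\beta_i=1$ (resp.\ $\beta_i=0$) would immediately make $\triangle$'s (resp.\ $\pentagon$'s) deviation to $t_{i0}$ (resp.\ $t_{i1}$) profitable. Your case analysis instead first pins down the exact values $\gamma_i=\alpha_i$ and $\delta_i=\eta_i=\alpha_{i-1}$ inside each boundary case and then contradicts; this is valid but redundant, since the crude upper bounds already suffice. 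A small point of care in your $\beta_i=0$ branch: you invoke the constraints at $k_i,\ell_i$, which requires $k_i$ to be reached; this follows from the inductive hypothesis $\alpha_{i-1}>0$ together with $\Box$'s constraint at $g_i$ (forcing $\delta_i>0$), but you should make that dependency explicit, as the paper does.
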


We now know that if a stationary NE satisfying this constraint exists, it necessarily includes a double-exponentially small probability.
We now prove that such an NE exists.

\begin{restatable}{lemma}{doubleExpNEExists}\label{lem:double_exp_NE_exists}
%[App.~\ref{app:smallprobs}]{lemma}
    For every $n \in \Nb$, the game $\Game^n$ has a stationary Nash equilibrium where player $\Circle$'s expected payoff is $1$.
\end{restatable}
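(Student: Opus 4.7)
The plan is to exhibit an explicit stationary profile $\bsigma$ and verify the Nash conditions directly; the probabilities in $\bsigma$ will be exactly the ones forced by the indifference equations appearing in the proof of \Cref{lem:if_NE_then_double_exp}. Concretely, I will set $\alpha_0 := 1/2$ and $\alpha_i := \alpha_{i-1}^2 = 1/2^{2^i}$. In $\bsigma$, every controlled vertex on the main paths $a_i \to b_i \to c_i$, $d_i \to e_i \to f_i$, $g_i \to h_i \to j_i$, $k_i \to \ell_i \to m_i$ deterministically continues along the path. The randomisations will be: at $r_i$ (for $i \geq 1$) player $\Circle$ plays $\alpha_i$ toward $t_i$; at $c_i$ player $\Diamond$ plays $\beta_i := 1/2$ toward $d_i$; at $f_i$ player $\Circle$ plays $\gamma_i := \alpha_i$ toward $t_{i2}$; and at $j_i$ and $m_i$ player $\Circle$ plays $\delta_i := \alpha_{i-1}$ toward $k_i$ and $\eta_i := \alpha_{i-1}$ toward $t_{i4}$ respectively. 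Under $\bsigma$ every reached terminal lies in $\{t_{i2}, t_{i3}, t_{i4}, t_{i5}, t_{i6}\}$, all of which pay $\Circle$ reward $1$, so $\Circle$'s expected payoff will be exactly $1$.

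To verify the NE property, for each player I will compute the expected payoff from every intermediate vertex under $\bsigma$ and compare it with each one-step deviation. Player $\Circle$'s randomisations are automatically optimal, since every terminal it can reach still pays it $1$. For $\Diamond$ at $c_i$, the upper branch yields $\gamma_i = \alpha_i$ and the lower branch yields $\delta_i \eta_i = \alpha_{i-1}^2 = \alpha_i$, so any value of $\beta_i$ is optimal. The same arithmetic will show that the deviations at $d_i, e_i$ into $\sfdef_i$ (by $\Box$ and $\triangle$) and the deviations at $g_i, h_i, k_i, \ell_i$ into $\sfdef_{i-1}$ are exactly tied with continuing along the main path; this is precisely how $\gamma_i, \delta_i, \eta_i$ were tuned.

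The remaining non-trivial inequalities rule out the deviations to $t_{i0}$ at $a_i$ and to $t_{i1}$ at $b_i$. Propagating the expected payoffs from $c_i$ back, these reduce to $(\alpha_i + \alpha_{i-1})/2 \leq 3/8$ for player $\triangle$ and $\alpha_i \leq 1/4$ for player $\pentagon$. Both follow easily from $\alpha_{i-1} \leq 1/2$ and $\alpha_i \leq 1/4$, which hold for all $i \geq 1$. The only delicate step, and the main obstacle, is the boundary case $i=1$: here $\alpha_0 = 1/2$ and the $\triangle$-inequality becomes an equality, confirming that the constants $1/8$ and $11/8$ in the rewards of $t_{10}, t_{11}$ are tuned precisely so that the construction barely succeeds. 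No further ingenuity is required beyond careful bookkeeping.
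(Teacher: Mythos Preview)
Your proposal is correct and follows essentially the same approach as the paper: both define the profile by fixing $\beta_i = \tfrac12$, $\gamma_i = \alpha_i$, $\delta_i = \eta_i = \alpha_{i-1}$ with deterministic continuation elsewhere, and then verify the NE conditions player by player, with the only non-tied inequalities being the $t_{i0}$ and $t_{i1}$ deviations. Your reductions $(\alpha_i+\alpha_{i-1})/2 \le 3/8$ and $\alpha_i \le 1/4$ are exactly the bounds the paper derives (the latter via $\alpha_{i-1}^2 \le 1/4$), and you correctly note that $i=1$ is the tight case.
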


Together, \cref{lem:doubleexpsmall,lem:double_exp_NE_exists} prove \Cref{thm:Gn}.
This theorem should be understood as a lower bound on \Cref{lem:double-exponential-patience}, that shows that one cannot expect a better complexity than $\NP^\NP$ using the techniques we use here.

However, let us now note that the above result is true only for exact Nash equilibria.
For $\epsilon$-NEs, our construction does not lead to the same result, as shown by the following lemma.

\begin{restatable}{lemma}{epsilonImpliesSimpleExp}\label{lem:epsilon_implies_simple_exp}
%[App.~\ref{app:smallprobs}]
        Let $n \in \Nb$ and $\epsilon > 0$.
    In the game $\Game^n$, there exists a stationary $\epsilon$-NE $\btau$ where player $\Circle$'s expected payoff is $1$, and where for each player $i$ and every edge $uv$ with $u \in V_i$, the size $\bit\left(\tau_i(u)(v)\right)$ is bounded by a polynomial of $\bit(n)$ and $\bit(\epsilon)$.
    \end{restatable}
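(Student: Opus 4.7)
The plan is to build $\btau$ by truncating the double-exponential recurrence in the exact Nash equilibrium $\bsigma$ from Lemma~\ref{lem:double_exp_NE_exists} at a logarithmic depth. In $\bsigma$ we have $\sigma_\circ(r_i) = \alpha_i = 1/2^{2^i}$ for $i \geq 1$, together with $\gamma_i = \alpha_i$ and $\delta_i = \eta_i = \alpha_{i-1}$ enforcing the indifference of $\Box$, $\smalltriangle$, and $\smallpentagon$ at the relevant vertices, and $\beta_i$ a fixed constant bounded away from $0$ and $1$ (for concreteness, $1/2$). I would choose $k \defas \lceil \log_2 \log_2 (2n/\epsilon) \rceil$, so that $\alpha_k \leq \epsilon/(2n)$, and let $\btau$ coincide with $\bsigma$ on every gadget $\mul_i$ with $i \leq k$, while for $i > k$ I replace every occurrence of $\alpha_i$ by $\alpha_k$: formally $\tau_\circ(r_i) = \gamma_i = \delta_i = \eta_i = \alpha_k$ and $\beta_i = 1/2$. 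Every probability used in $\btau$ is then either $1/2$ or $\alpha_j$ for some $j \leq k$, and each of the latter has bit size at most $2^k + O(1) = O(\log(n/\epsilon))$, which is polynomial in $\bit(n) + \bit(\epsilon)$.

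Next, I would verify that player $\Circle$'s expected payoff under $\btau$ is still exactly $1$. In every gadget $\mul_i$, player $\smalltriangle$ still strictly prefers $b_i$ over $t_{i0}$ and $\smallpentagon$ still strictly prefers $c_i$ over $t_{i1}$: these preferences are robust because the relevant payoff differences are bounded below by a universal positive constant (independent of $\alpha_k$) once $\beta_i$ is fixed to $1/2$. The indifference checks at the downstream vertices in $\mul_i$ then ensure that no player detours through a $\sfdef$-sub-gadget, so that play reaches one of $t_{i2}, \ldots, t_{i6}$, each of which pays $\Circle$ exactly $1$; since $s_0$ reaches some $\mul_i$ with total probability $1$, the expected payoff of $\Circle$ is $1$.

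Finally, I would verify that $\btau$ is an $\epsilon$-NE by bounding the slack in the broken indifference conditions. For $i \leq k$ all conditions hold exactly. For $i > k$ every broken condition takes the form $\alpha_i = \alpha_{i-1}^2$: in $\btau$ one side equals $\alpha_k$ and the other $\alpha_k^2$, so the local slack is at most $\alpha_k - \alpha_k^2 \leq \alpha_k$. Since $\mul_i$ is reached with probability $1/n$ from the initial vertex $s_0$ and there are at most $n-k$ such gadgets, the total gain of any single unilateral deviation is at most $(n-k)/n \cdot \alpha_k \leq \alpha_k \leq \epsilon$. The main obstacle I anticipate is precisely the observation that this slack remains $O(\alpha_k)$ rather than $O(n \cdot \alpha_k)$, which would force a stricter threshold: to resolve it, one must check each of the broken equalities at $c_i, d_i, e_i, g_i, h_i, k_i, l_i$ by comparing the two branches and confirming that the terms of magnitude $n-i$ cancel between them, so that only order-$1$ coefficients multiply $\alpha_k - \alpha_k^2$. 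Once this case analysis is carried out, the choice $k = \lceil \log_2 \log_2 (2n/\epsilon) \rceil$ concludes the proof.
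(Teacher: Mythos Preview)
Your approach is correct and is essentially the paper's: truncate the recursion $\alpha_i=\alpha_{i-1}^2$ at a depth logarithmic in $n/\epsilon$, so that every probability appearing in $\btau$ has bit size $O(\log(n/\epsilon))$, which is polynomial in $\bit(n)+\bit(\epsilon)$. The only difference is the truncation value---the paper sets $\theta_i=0$ for $i$ beyond the cutoff $I$ rather than freezing at the constant $\alpha_k$, so that only a \emph{single} gadget carries a broken indifference and the $\epsilon$-bound becomes a one-line computation.

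One remark: the obstacle you anticipate does not arise in your own construction. With $\tau_\circ(r_i)=\gamma_i=\alpha_k$ and $\tau_\circ(r_{i-1})=\delta_i=\eta_i=\alpha_k$ for every $i>k$, the indifference conditions at $d_i,e_i,g_i,h_i,k_i,\ell_i$ hold \emph{exactly} (re-run the checks in the proof of \Cref{lem:double_exp_NE_exists} with $\alpha_k$ substituted for both $\alpha_i$ and $\alpha_{i-1}$); only the equality at $c_i$ breaks, and there player~$\Diamond$'s relevant terminal rewards are $0$ and $1$, so no $n{-}i$ factor can appear. The case analysis you flag is therefore unnecessary, and your bound $\frac{n-k}{n}\cdot\tfrac12(\alpha_k-\alpha_k^2)\le\alpha_k\le\epsilon/(2n)<\epsilon$ already suffices.
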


We conjecture that this can be generalised, and that as soon as we consider $\epsilon$-NEs instead of NEs, double-exponentially small probabilities are no longer required.

\begin{restatable}{conjecture}{expsmallprobs}\label{conjecture:1}
    Let $\Game$ be a game, let $\bx, \by$ be two vectors denoting the constraints, and let $\epsilon > 0$.
    If there exists a stationary $\epsilon$-NE in $\Game$ where the expected payoffs are between $\bx$ and $\by$, then there exists one that has polynomial sized fixed point representation.
\end{restatable}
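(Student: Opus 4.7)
The plan is to adapt the floating-point rounding argument of \Cref{lem:floating-point-eps-nash} from the exact-NE case to arbitrary $\epsilon$-NEs. Given a stationary $\epsilon$-constrained NE $\bsigma^*$, I would first apply the floating-point approximation of \Cref{Result: Approximation of prob distribution} to each per-vertex distribution $\bsigma^*(v)$, obtaining a profile $\bsigma'$ whose distributions lie in $\Dc(\ell)$ for some precision $\ell$ to be chosen later. By Solan's continuity result (\Cref{Result: Continuity of value in MC with reachability}), the expected payoffs under $\bsigma'$---both the actual payoffs and the best-response payoffs in the single-player MDP $\MDProc_i$ obtained by fixing the strategies of the other players---would change by at most $O(n^2 2^{-\ell})$ compared to $\bsigma^*$.

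Choosing $\ell$ polynomial in $n$, $|\Pi|$ and $\bit(\epsilon)$ makes this rounding error arbitrarily smaller than $\epsilon$, so the rounded $\bsigma'$ is an $(\epsilon + \delta)$-NE satisfying the payoff constraints up to an additive $\delta$, for some $\delta \ll \epsilon$. To turn this into an honest $\epsilon$-NE, the key idea would be that from any $\epsilon$-NE $\bsigma^*$ one can extract a slightly ``stricter'' profile---an $(\epsilon - \delta)$-NE---with the same structural properties, so that the rounding error can then be absorbed into the slack. A natural candidate for producing such slack is a uniform-smoothing operation $\sigma_i^{\sharp}(v) = (1-\lambda)\sigma^*_i(v) + \lambda \cdot \mathrm{unif}(v)$ for some small $\lambda > 0$, together with Solan's theorem to bound the resulting payoff drift on both sides of the $\epsilon$-NE inequality.

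The main obstacle is precisely this slack-creation step. The set of $\epsilon$-constrained NEs is a closed semialgebraic set, and the given $\bsigma^*$ may lie on its topological boundary: a player's best-deviation gain may equal $\epsilon$ exactly, in which case any perturbation risks pushing it strictly above $\epsilon$. One would have to argue, for instance, that the relative interior of this semialgebraic set is non-empty whenever the set itself is, and moreover contains a polynomial-size representable point---perhaps by combining a Basu--Pollack--Roy-style bound \cite{BPR96} (as used in \Cref{lem:double-exponential-patience}) with a dimension argument on the slack directions, together with the observation that the polynomial system describing $\epsilon$-NEs still has degree $2$ in $O(m |\Pi|)$ variables. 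Handling the degenerate case where the $\epsilon$-NE set is lower-dimensional---in particular, reduces to a discrete collection of exact NEs with no nearby $(\epsilon - \delta)$-NE---appears to require genuinely new ideas; a reasonable intermediate step would be to show that in such degenerate cases an exact constrained NE must exist, so that \Cref{lem:floating-point-eps-nash} applies directly, and that otherwise the set has full local dimension in the slack coordinates and the smoothing argument above goes through.
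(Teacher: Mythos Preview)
This statement is labelled a \emph{conjecture} in the paper; the paper does not prove it and explicitly leaves it open, remarking only that a proof would improve the upper bound from $\NP^{\NP}$ to $\NP$. There is therefore no paper proof to compare against.

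Your plan is the natural first attack and you have correctly isolated the obstruction, but two points deserve comment. First, a minor gap you glide over: applying the floating-point truncation directly to the given $\bsigma^*$ only yields a polynomial-size object if the exponents $e$ are bounded, i.e., if the nonzero probabilities of $\bsigma^*$ are not too small. This step is salvageable: the ETR system describing ``$\epsilon$-NE with payoffs in $[\bx,\by]$'' is still degree~$2$ in $O(m|\Pi|)$ variables (the deviation constraints become $r^i_v \geq r^i_w - \epsilon$), so \Cref{lem:radius-of-etr} yields a witness with probabilities at least $2^{-\tau 2^{O(m|\Pi|)}}$, exactly as in \Cref{lem:double-exponential-patience}. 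You should replace $\bsigma^*$ by that witness before rounding.

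Second, the substantive gap---which you yourself flag---is that rounding turns an $\epsilon$-NE into an $(\epsilon+\delta)$-NE satisfying the constraints only up to~$\delta$, whereas the conjecture asks for an $\epsilon$-NE with the \emph{exact} constraints. Your smoothing idea does not manufacture the needed slack: mixing $\sigma_i$ towards the uniform distribution can only lower player~$i$'s own payoff while leaving her best-response value against $\bsigma_{-i}$ unchanged, so her deviation gain \emph{increases}. The interior-point route fails for the reason you anticipate: the feasible set is closed semialgebraic and may have empty interior (take $x_i = y_i$, or a profile where some player's maximal deviation gain equals $\epsilon$ identically along the whole set), and then no $(\epsilon-\delta)$-NE with those constraints exists for any $\delta>0$. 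Your fallback---that degeneracy forces an exact NE, so \Cref{lem:floating-point-eps-nash} would apply---is not evidently true either: a profile with deviation gain exactly~$\epsilon$ is an $\epsilon$-NE but not an NE. In short, your proposal is a correct diagnosis of where the difficulty lies, but it does not close it; that is consistent with the statement's status as an open conjecture.
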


Proving such a result would lead to an $\NP$ upperbound on our approximate problem, since such an $\epsilon$-NE can then be guessed and checked in polynomial time.

\section{A lower bound}\label{sec:lowerbounds}
We dedicate this section to show $\NP$-hardness of the approximate version of the constrained existence problem.

\begin{theorem}
    The approximate constrained existence problem of NEs is $\NP$-hard, even with two players\footnote{This result is stronger than the one presented in the conference version of this paper, where hardness was proven only for an arbitrary number of players.}.
\end{theorem}

\begin{proof}
    We proceed by reduction from the $3\SAT$ problem.
    Let $\phi = \bigvee_{i=1}^m \left(L_{i1} \wedge L_{i2} \wedge L_{i3}\right)$ be a CNF formula, over the variables $x_1, \dots, x_n$.

    \paragraph*{Construction of the game $\Game^\phi$}

    We define the game $\Game^\phi$ as follows.
    \begin{itemize}
        \item There are two players, player $\Circle$ and player $\Box$.

        \item Player $\Circle$ controls the vertex $\top_L$, for each literal $L$.
        Player $\Box$ controls the vertex $C_i$, for each clause $C_i$, and the vertex $L$, for each literal $L$.
        The stochastic vertices are named $s_0, s_1$, and $s_L$, for each literal $L$.
        The terminal vertices are named $t_\square$, $t_\emptyset$, $t_{\circ\square}$, and $t_\circ$.

        \item The initial vertex is the stochastic vertex $s_0$, from which the outgoing edges are distributed as follows:
        \begin{itemize}
            \item the probability of going to the vertex $x_k$, for each $k \in \{1, \dots, n\}$, is $\frac{1}{2^{k+1}}$;
            
            \item the probability of going to the vertex $\neg x_k$, for each $k$, is $\frac{1}{2^{k+1}}$;
            
            \item the probability of going to the vertex $s_1$ is $\frac{1}{2^{n+2}}$;
            
            \item the probability of going to the terminal vertex $t_\circ$ is $\frac{1}{2^{n+2}}$.
        \end{itemize}

        \item From the stochastic vertex $s_1$, the outgoing edges are distributed as follows: for each $i \in \{1, \dots, m\}$, the probability of going to the vertex $C_i$ is $\frac{1}{m}$.

        \item From each vertex $C_i$, player $\Box$ can move to the vertices $t_\square$, $L_{i1}$, $L_{i2}$, and $L_{i_3}$.

        \item From each vertex $L$, player $\Box$ can move to the vertices $\top_L$ and $t_\circ$.

        \item From each vertex $\top_L$, player $\Circle$ can move to the vertices $t_{\circ\square}$ or $t_\emptyset$.

        \item In the terminal vertex $t_\square$, player $\Circle$ gets the payoff $0$ and player $\Box$ gets the payoff $1$.
        In the terminal vertex $t_\emptyset$, both players receive the payoff $0$.
        In the terminal vertex $t_{\circ\square}$, they both receive $1$, and in $t_\circ$, player $\Circle$ receives $1$, and player $\Box$ receives $0$.
    \end{itemize}

    \begin{figure}
        \centering
        \begin{tikzpicture}[every node/.style={scale=0.7}, node distance=2.2cm,initial text={}]
            %Start
    		\node[stoch, initial] (s0) {$s_0$};
            \node[below left of=s0] (t0) {$t_\circ:~\stackrel{\circ}{1}\stackrel{\square}{0}$};
    		\node[below of=s0, stoch] (s1) {$s_1$};
            \path[->] (s0) edge[bend left=10] node[above left] {$\frac{1}{2^{n+2}}$} (t0);
            \path[->] (s0) edge[bend left=10] node[right] {$\frac{1}{2^{n+2}}$} (s1);

            %Clauses
            \node[right of=s1, squarev] (C1) {$C_1$};
            \node[below of=C1] (dots1) {\dots};
            \node[below of=dots1, squarev] (Cm) {$C_m$};
            \node[left of=dots1] (t1) {$t_\square:~\stackrel{\circ}{0}\stackrel{\square}{1}$};
            \path[->] (s1) edge[bend left=10] node[above] {$\frac{1}{m}$} (C1);
            \path[->] (s1) edge[bend left=10] node[above right] {$\frac{1}{m}$} (Cm);
            \path[->] (C1) edge (t1);
            \path[->] (Cm) edge (t1);

            %Literals
            \node[right of=C1] (fake) {};
            \node[right of=fake, squarev] (xn) {$x_n$};
            \node[below of=xn, squarev] (nxn) {$\neg x_n$};
            \node[above of=xn] (dots2) {\dots};
            \node[squarev, above of=dots2] (nx1) {$\neg x_1$};
            \node[squarev, above of=nx1] (x1) {$x_1$};
            \path[->] (s0) edge[bend left=10] node[above left] {$\frac{1}{4}$} (x1);
            \path[->] (s0) edge[bend left=10] node[above left] {$\frac{1}{4}$} (nx1);
            \path[->] (s0) edge[bend left=10] node[above right] {$\frac{1}{2^{n+1}}$} (xn);
            \path[->] (s0) edge[bend left=10] node[above right] {$\frac{1}{2^{n+1}}$} (nxn);
            \path[->, gray, dashed] (C1) edge (x1);
            \path[->, gray, dashed] (C1) edge (xn);
            \path[->, gray, dashed] (C1) edge (nxn);

            %From x_n
            \node[circlev, above right of=xn] (topxn) {$\top_{x_n}$};
            \node[below right of=xn] (tbot) {$t_\circ:~\stackrel{\circ}{1}\stackrel{\square}{0}$};
            \node[above right of=topxn] (ttop) {$t_{\circ\square}:~\stackrel{\circ}{1}\stackrel{\square}{1}$};
            \node[below right of=topxn] (t2) {$t_\emptyset:~\stackrel{\circ}{0}\stackrel{\square}{0}$};
            \path[->] (xn) edge (topxn);
            \path[->] (xn) edge (tbot);
            \path[->] (topxn) edge (ttop);
            \path[->] (topxn) edge (t2);

            %Various dots
            \node[above of=xn] (dots3) {\dots};
            \node[right of=x1] (dots4) {\dots};
            \node[right of=nx1] (dots5) {\dots};
            \node[right of=Cm] (dots6) {\dots};
            \node[right of=nxn] (dots7) {\dots};
            
		\end{tikzpicture}
        \caption{A reduction from $3\SAT$}
        \label{fig:sat}
    \end{figure}

    That game is depicted by \Cref{fig:sat}.
    The grey dashed arrows correspond to the case where $C_1 = x_1 \vee x_n \vee \neg x_n$.
    For convenience, the vertex $t_\circ$ is depicted twice.

    We now state the following result, whose proof is found in \cref{app:LB}.

    \begin{restatable}{proposition}{formulaimpliesNE}\label{prop:formulaimpliesNE}
        If the formula $\phi$ is satisfiable, then there exists a stationary Nash equilibrium in the game $\Game^\phi$ such that player $\Circle$'s expected payoff is $1$, and player $\Box$'s expected payoff $\frac{1}{2}$.
    \end{restatable}

    For the converse, we show this stronger result, whose proof is also in \cref{app:LB}.

    \begin{restatable}{proposition}{NEimpliesFormula}\label{prop:NEimpliesFormula}
        If there is a stationary $\epsilon$-Nash equilibrium $\bsigma$ in the game $\Game^\phi$ that satisfy the constraints:
        \begin{itemize}
            \item $\Eb(\bsigma)[\mu_\circ] \geq 1 - \epsilon$;
            \item $\frac{1}{2} - \epsilon \leq \Eb(\bsigma)[\mu_\square] \leq \frac{1}{2} + \epsilon$.
        \end{itemize}
        where:
        $$\epsilon = \frac{1}{2^{n+3} (6n + 8m + 1)},$$
        then the formula $\phi$ is satisfiable.
    \end{restatable}

    Together, these two propositions prove that our promise problem is $\NP$-hard.
\end{proof}

\section{Conclusion and discussion}
We studied the (functional) approximate constrained problem of NEs. Specifically, we asked the following: assuming the existence of a stationary Nash equilibrium that satisfies given payoff constraints, can we efficiently compute a stationary strategy profile that is an $\epsilon$-Nash equilibrium and satisfies the constraints up to an additive error $\epsilon$?

% Although we have shown that our results work for turn-based stochastic games, we also remark that one can extend our lower bound results, including  also hold for concurrent games, as well as broader objectives such as limit-average objectives. Indeed, this is  ETR sentences similar to the ones written for turn-based games with reachabilitilty objectives, extend naturally to concurrent games with mean-payoff objectives by only increasing the degree of the polynomials by one. Following the same step as our proofs, we can still ensure that we have an $\FNP^\NP$ procedure for such problems. We however restrict our results to turn-based games in the main paper for increased clarity. 
Although our results focus on turn-based stochastic games, we note that the lower bound also holds in more general settings, including concurrent games (games on graphs where the players select actions simultaneously) and broader objectives such as limit-average (mean-payoff) objectives. 
For the upper bound, we remark that the existential theory of reals formulations we use for turn-based games with reachability objectives can be naturally adapted to concurrent games with mean-payoff objectives, with only a constant increase in the degree of the polynomials involved. By following similar proof techniques, one can still obtain an $\FNP^\NP$ upper bound for the corresponding constrained $\epsilon$-NE problem in these more general settings. However, we restrict our presentation and main results to turn-based games with reachability objectives in this paper to maintain clarity.
%However, this increases the degree of these sentences,  thereby modifying the variables in %\cref{}

We showed that the decision version of this problem lies in the class $\NP^\NP$, and the corresponding search problem belongs to $\FNP^\NP$. On the other hand, we established only an $\NP$-hardness lower bound. This leaves a gap in the known complexity of the problem, and closing this gap remains an open question.

One possible direction is to prove that if an $\epsilon$-Nash equilibrium exists, then there also exists one in which all probabilities are at least inverse-exponential in the size of the input, assuming the existence of a Nash equilibrium. Such a result would not contradict our example in Section~\ref{sec:smallprobs}, which shows that double-exponentially small probabilities are required in the exact version of the problem. If exponentially small probabilities suffice in the approximate setting, then a strategy profile could be represented using only polynomially many bits in the fixed-point representation. Since such a strategy yields a polynomial certificate, and verification requires only finding the values of MDPs with fixed-point representation, this would yield an $\NP$-completeness result for the decision version.

Alternatively, if such a bound does not hold, and there exists a counterexample where all $\epsilon$-Nash equilibria satisfying the constraints require double-exponentially small probabilities, this would suggest an inherent complexity. It would also point to intrinsic limitations in our techniques, analogous to those demonstrated by Hansen, Koucký, and Miltersen~\cite{HKM09}, who showed that double-exponentially small probabilities are necessary for $\epsilon$-optimal strategies in concurrent stochastic games. Their result implies that any algorithm that explicitly manipulates these probabilities must use at least exponential space in the worst case, and our results would also imply the same for our problem. 

We end by recalling our conjecture in \cref{sec:smallprobs}.
\expsmallprobs*
\bibliography{biblio}

\begin{thebibliography}{10}

\bibitem{AJKS21}
Agarwal Alekh, Nan Jiang, Sham~M. Kakade, and Sun Wen.
\newblock {\em Reinforcement Learning: Theory and Algorithms}.
\newblock https://rltheorybook.github.io/, 2021.

\bibitem{ABCT25}
Ali Asadi, L{\'{e}}onard Brice, Krishnendu Chatterjee, and K.~S. Thejaswini.
\newblock {\(\epsilon\)}-stationary nash equilibria in multi-player stochastic graph games.
\newblock In {\em 45th {IARCS} Annual Conference on Foundations of Software Technology and Theoretical Computer Science, {FSTTCS} 2025, {BITS} Pilani, {K} {K} Birla Goa Campus, India, December 17-19, 2025}, volume 360 of {\em LIPIcs}, pages 9:1--9:17. Schloss Dagstuhl - Leibniz-Zentrum f{\"{u}}r Informatik, 2025.
\newblock URL: \url{https://doi.org/10.4230/LIPIcs.FSTTCS.2025.9}, \href {https://doi.org/10.4230/LIPICS.FSTTCS.2025.9} {\path{doi:10.4230/LIPICS.FSTTCS.2025.9}}.

\bibitem{ACSS24}
Ali Asadi, Krishnendu Chatterjee, Raimundo Saona, and Jakub Svoboda.
\newblock {Concurrent Stochastic Games with Stateful-Discounted and Parity Objectives: Complexity and Algorithms}.
\newblock In {\em {FSTTCS 2024}}, volume 323 of {\em Leibniz International Proceedings in Informatics (LIPIcs)}, pages 5:1--5:17, 2024.
\newblock \href {https://doi.org/10.4230/LIPIcs.FSTTCS.2024.5} {\path{doi:10.4230/LIPIcs.FSTTCS.2024.5}}.

\bibitem{BK08}
Christel Baier and Joost-Pieter Katoen.
\newblock {\em Principles of Model Checking}.
\newblock The MIT Press, 2008.

\bibitem{BPR96}
Saugata Basu, Richard Pollack, and Marie-Fran{\c{c}}oise Roy.
\newblock On the combinatorial and algebraic complexity of quantifier elimination.
\newblock {\em Journal of the ACM (JACM)}, 43(6):1002--1045, 1996.

\bibitem{BIT24}
Sougata Bose, Rasmus Ibsen-Jensen, and Patrick Totzke.
\newblock Bounded-memory strategies in partial-information games.
\newblock In {\em Proceedings of the 39th Annual ACM/IEEE Symposium on Logic in Computer Science}, LICS '24, New York, NY, USA, 2024. Association for Computing Machinery.
\newblock \href {https://doi.org/10.1145/3661814.3662096} {\path{doi:10.1145/3661814.3662096}}.

\bibitem{BHT25}
L\'{e}onard Brice, Thomas~A. Henzinger, and K.~S. Thejaswini.
\newblock {Finding Equilibria: Simpler for Pessimists, Simplest for Optimists}.
\newblock In {\em 50th International Symposium on Mathematical Foundations of Computer Science (MFCS 2025)}, volume 345, pages 30:1--30:18, 2025.
\newblock \href {https://doi.org/10.4230/LIPIcs.MFCS.2025.30} {\path{doi:10.4230/LIPIcs.MFCS.2025.30}}.

\bibitem{CMJ04}
Krishnendu Chatterjee, Rupak Majumdar, and Marcin Jurdzinski.
\newblock On {Nash} equilibria in stochastic games.
\newblock In {\em Computer Science Logic, 18th International Workshop, {CSL} 2004}, volume 3210 of {\em Lecture Notes in Computer Science}, pages 26--40. Springer, 2004.
\newblock \href {https://doi.org/10.1007/978-3-540-30124-0\_6} {\path{doi:10.1007/978-3-540-30124-0\_6}}.

\bibitem{DGP09}
Constantinos Daskalakis, Paul~W. Goldberg, and Christos~H. Papadimitriou.
\newblock The complexity of computing a nash equilibrium.
\newblock {\em SIAM Journal on Computing}, 39(1):195--259, 2009.
\newblock \href {https://doi.org/10.1137/070699652} {\path{doi:10.1137/070699652}}.

\bibitem{dAHJ01}
Luca de~Alfaro, Thomas~A. Henzinger, and Ranjit Jhala.
\newblock Compositional methods for probabilistic systems.
\newblock In {\em CONCUR 2001 --- Concurrency Theory}, pages 351--365, Berlin, Heidelberg, 2001. Springer Berlin Heidelberg.

\bibitem{DFMS18}
Argyrios Deligkas, John Fearnley, Themistoklis Melissourgos, and Paul~G. Spirakis.
\newblock Approximating the existential theory of the reals.
\newblock In George Christodoulou and Tobias Harks, editors, {\em Web and Internet Economics - 14th International Conference, {WINE} 2018, Oxford, UK, December 15-17, 2018, Proceedings}, volume 11316 of {\em Lecture Notes in Computer Science}, pages 126--139. Springer, 2018.
\newblock \href {https://doi.org/10.1007/978-3-030-04612-5\_9} {\path{doi:10.1007/978-3-030-04612-5\_9}}.

\bibitem{FV97}
Jerzy Filar and Koos Vrieze.
\newblock {\em Competitive {Markov} Decision Processes}.
\newblock Springer-Verlag, 1997.

\bibitem{FKNP11}
Vojt{\v{e}}ch Forejt, Marta Kwiatkowska, Gethin Norman, and David Parker.
\newblock {\em Automated Verification Techniques for Probabilistic Systems}, pages 53--113.
\newblock Springer Berlin Heidelberg, Berlin, Heidelberg, 2011.
\newblock \href {https://doi.org/10.1007/978-3-642-21455-4_3} {\path{doi:10.1007/978-3-642-21455-4_3}}.

\bibitem{FM13}
S{\o}ren Kristoffer~Stiil Frederiksen and Peter~Bro Miltersen.
\newblock Approximating the value of a concurrent reachability game in the polynomial time hierarchy.
\newblock In {\em {ISAAC}}, volume 8283 of {\em Lecture Notes in Computer Science}, pages 457--467. Springer, 2013.

\bibitem{HKM09}
Kristoffer~Arnsfelt Hansen, Michal Kouck{\'{y}}, and Peter~Bro Miltersen.
\newblock Winning concurrent reachability games requires doubly-exponential patience.
\newblock In {\em {LICS}}, pages 332--341. {IEEE} Computer Society, 2009.

\bibitem{HS20}
Kristoffer~Arnsfelt Hansen and Steffan~Christ S{\o}lvsten.
\newblock {\(\exists \mathbb{R}\)}-completeness of stationary nash equilibria in perfect information stochastic games.
\newblock In {\em 45th International Symposium on Mathematical Foundations of Computer Science, {MFCS} 2020}, volume 170 of {\em LIPIcs}, pages 45:1--45:15. Schloss Dagstuhl - Leibniz-Zentrum f{\"{u}}r Informatik, 2020.
\newblock URL: \url{https://doi.org/10.4230/LIPIcs.MFCS.2020.45}.

\bibitem{Lef81}
Claude Lefèvre.
\newblock Optimal control of a birth and death epidemic process.
\newblock {\em Operations Research}, 29(5):971--982, 1981.
\newblock URL: \url{http://www.jstor.org/stable/170234}.

\bibitem{LMM03}
Richard~J. Lipton, Evangelos Markakis, and Aranyak Mehta.
\newblock Playing large games using simple strategies.
\newblock In Daniel~A. Menasc{\'{e}} and Noam Nisan, editors, {\em Proceedings 4th {ACM} Conference on Electronic Commerce (EC-2003), San Diego, California, USA, June 9-12, 2003}, pages 36--41. {ACM}, 2003.
\newblock \href {https://doi.org/10.1145/779928.779933} {\path{doi:10.1145/779928.779933}}.

\bibitem{Nas50}
John~F. Nash.
\newblock Equilibrium points in $n$-person games.
\newblock {\em Proceedings of the National Academy of Sciences}, 36(1):48--49, 1950.
\newblock \href {https://doi.org/10.1073/pnas.36.1.48} {\path{doi:10.1073/pnas.36.1.48}}.

\bibitem{SEC16}
Dawei Shi, Robert~J Elliott, and Tongwen Chen.
\newblock On finite-state stochastic modeling and secure estimation of cyber-physical systems.
\newblock {\em IEEE Transactions on Automatic Control}, 62(1):65--80, 2016.

\bibitem{Sol03}
Eilon Solan.
\newblock Continuity of the value of competitive {M}arkov decision processes.
\newblock {\em Journal of Theoretical Probability}, 16:831--845, 2003.

\bibitem{TKI10}
Florent Teichteil-K\"{o}nigsbuch, Ugur Kuter, and Guillaume Infantes.
\newblock Incremental plan aggregation for generating policies in {MDP}s.
\newblock In {\em Proceedings of the 9th International Conference on Autonomous Agents and Multiagent Systems: Volume 1 - Volume 1}, AAMAS '10, page 1231–1238. International Foundation for Autonomous Agents and Multiagent Systems, 2010.

\bibitem{Umm10}
Michael Ummels.
\newblock {\em Stochastic multiplayer games: theory and algorithms}.
\newblock PhD thesis, RWTH Aachen, Germany, January 2010.

\bibitem{UW11}
Michael Ummels and Dominik Wojtczak.
\newblock {The Complexity of Nash Equilibria in Stochastic Multiplayer Games}.
\newblock {\em {Logical Methods in Computer Science}}, {Volume 7, Issue 3}, September 2011.
\newblock URL: \url{https://lmcs.episciences.org/1209}, \href {https://doi.org/10.2168/LMCS-7(3:20)2011} {\path{doi:10.2168/LMCS-7(3:20)2011}}.

\end{thebibliography}
 \appendix
% \section{Appendix for \cref{sec:UB}}\label{app:UB}
% \input{7AppendixUB}

\section{Appendix for \cref{sec:smallprobs}}\label{app:smallprobs}
\subsection{Proof of \cref{lem:doubleexpsmall}}

\doubleexpsmall*
\begin{proof}
    Let $\bsigma$ be such a stationary NE.

    For each $i \in \{0, \dots, n\}$, we write $\alpha_i, \beta_i, \gamma_i, \delta_i$, and $\eta_i$, for the probabilities $\sigma_\circ(r_i)(t)$, $\sigma_\diamond(c_i)(d_i)$, $\sigma_\circ(f_i)(t_{i2})$, $\sigma_\circ(j_i)(k_i)$, and $\sigma_\circ(m_i)(t_{i4})$, respectively.
    Since the vertex $r_0$ is stochastic, we fix $\alpha_0 = \frac{1}{2}$.
    We will now show the following.

    \begin{proposition}
        For each $i$, we have $\alpha_i = \frac{1}{2^{2^i}}$.
    \end{proposition}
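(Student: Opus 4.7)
The plan is to proceed by induction on $i$. The base case $i=0$ is immediate, since $r_0$ is a stochastic vertex whose two outgoing edges each carry probability $\tfrac{1}{2}$, so $\alpha_0 = \tfrac{1}{2} = \tfrac{1}{2^{2^0}}$.

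For the inductive step, fix $i \geq 1$ and assume $\alpha_{i-1} = \tfrac{1}{2^{2^{i-1}}}$. I would argue in three stages. First, since $\Circle$'s expected payoff is $1$ and the gadget $\mul_i$ is reached from $s_0$ with probability $\tfrac{1}{n}$, every play entering $\mul_i$ must almost surely end at some $t_{ij}$ with $j \in \{2,\dots,6\}$; in particular the sub-gadgets $\sfdef_i, \sfdef_{i-1}$ and the terminals $t_{i0}, t_{i1}$ are all unreachable. Hence at every reachable vertex among $a_i, b_i, d_i, e_i, g_i, h_i, k_i, \ell_i$ the controlling player must take the forward edge with probability $1$. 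Combining this with the Nash condition that the forbidden branches cannot be strictly more attractive, I would derive, whenever both sides are reachable, the equality $\gamma_i = \alpha_i$ at $d_i, e_i$ (from $\gamma_i \geq \alpha_i$ for $\Box$ and $\alpha_i \geq \gamma_i$ for $\triangle$), and analogously $\delta_i = \alpha_{i-1}$ at $g_i, h_i$ and $\eta_i = \alpha_{i-1}$ at $k_i, \ell_i$, each obtained from the corresponding pair of opposing best-response inequalities.

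Second, I would rule out $\beta_i \in \{0, 1\}$. If $\beta_i = 1$, then $d_i, e_i, f_i$ are reachable, so $\gamma_i = \alpha_i$, and $\triangle$'s expected payoff from $b_i$ reduces to $n - i - \alpha_i \leq n - i < n - i + \tfrac{1}{8}$; but the right-hand side is what $\triangle$ obtains by deviating at $a_i$ to $t_{i0}$, so $\triangle$ has a profitable deviation, contradicting that $\bsigma$ is an NE with $\Circle$'s payoff equal to $1$. Conversely, if $\beta_i = 0$, then $g_i, h_i, k_i, \ell_i$ are reachable, so $\delta_i = \eta_i = \alpha_{i-1}$, and $\pentagon$'s expected payoff from $c_i$ equals $1 - \alpha_{i-1}^2 < \tfrac{11}{8}$, strictly less than the reward $\pentagon$ would obtain by deviating at $b_i$ to $t_{i1}$, again a contradiction.

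With $0 < \beta_i < 1$, all six vertices $d_i, e_i, g_i, h_i, k_i, \ell_i$ are reached with positive probability, so the equalities $\gamma_i = \alpha_i$ and $\delta_i = \eta_i = \alpha_{i-1}$ hold simultaneously. Since $\Diamond$ strictly randomises at $c_i$, she must be indifferent between her two actions; her payoff on the $d_i$-branch is $\gamma_i$ and on the $g_i$-branch is $\delta_i \cdot \eta_i$, whence $\alpha_i = \gamma_i = \delta_i \eta_i = \alpha_{i-1}^2$, and the inductive hypothesis gives $\alpha_i = \tfrac{1}{2^{2^i}}$. The main obstacle is the case analysis ruling out $\beta_i \in \{0, 1\}$: the asymmetric constants $\tfrac{1}{8}$ and $\tfrac{11}{8}$ at $t_{i0}$ and $t_{i1}$ are tuned precisely so that $\triangle$'s and $\pentagon$'s slack at $a_i$ and $b_i$ forbids the two extremes while still leaving room for an equilibrium at intermediate $\beta_i$; everything else is a routine propagation of best-response conditions through the gadget.
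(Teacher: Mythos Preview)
Your proposal is correct and follows essentially the same approach as the paper: induction on $i$, establishing that $\triangle$ and $\pentagon$ must go forward at $a_i, b_i$ (because $\Circle$'s payoff is~$1$), proving $0 < \beta_i < 1$, deriving $\gamma_i = \alpha_i$ and $\delta_i = \eta_i = \alpha_{i-1}$ from opposing best-response inequalities, and concluding via $\Diamond$'s indifference at $c_i$. The only difference is ordering: the paper proves $0 < \beta_i < 1$ first using crude upper bounds ($\triangle$ gets at most $n-i$ from $d_i$, $\pentagon$ at most $1$ from $g_i$) and then derives the equalities, whereas you derive the equalities conditionally on reachability and then plug in the exact values to rule out $\beta_i \in \{0,1\}$; both orderings are sound.
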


    \begin{proof}
        The proposition is immediately true for $i = 0$.
    
        Let now $i > 0$, and let us assume that we have $\alpha_{i-1} = \frac{1}{2^{2^{i-1}}}$
        Let us already note that since player $\Circle$ gets expected payoff $1$, and since there is no terminal vertex in this game where she can get more than $1$, all terminal vertices where she gets less than $1$ must be reached with probability $0$.
        That implies for example that from the vertex $a_i$, which is reached with non-zero probability, player $\triangle$ goes deterministically to $b_i$, and that from the vertex $b_i$, player $\pentagon$ goes deterministically to the vertex $c_i$.

        This will first help us prove the following claim.

        \begin{claim}\label{claim:beta}
            We have $0 < \beta_i < 1$.
        \end{claim}

        \begin{proof}
            Since player $\triangle$ and player $\pentagon$ go to the terminal vertices $t_{i0}$ and $t_{i1}$, respectively, with probability $0$, and have no profitable deviation in $\bsigma$, we know that from the vertex $c_i$, those players must get at least expected payoffs $n-i+\frac{1}{8}$ and $\frac{11}{8}$, respectively.
            For player $\triangle$, that implies $\beta_i < 1$, since player $\triangle$'s expected payoff is always smaller than or equal to $n-i$ if the vertex $d_i$ is reached.
            For player $\pentagon$ it implies $\beta_i > 0$, since player $\pentagon$'s expected payoff is smaller than or equal to $1$ if the vertex $g_i$ is reached.
        \end{proof}
        
        Now, since the vertex $d_i$ is reached with non-zero probability, and since no terminal vertex giving player $\Circle$ the reward $0$ must ever be reached, we necessarily have $\sigma_\square(d_i)(r_i) = \sigma_{\smalltriangle}(e_i)(r_i) = 0$, and from the vertex $d_i$ player $\Box$ and $\triangle$ deterministically go to the vertex $f_i$.
        There, we can prove the following claim.

        \begin{claim}\label{claim:gamma}
            We have $\gamma_i = \alpha_i$.
        \end{claim}
        
        \begin{proof}
            By deviating and going to the gadget $\sfdef_i$, player $\Box$ and player $\triangle$ can get the expected payoffs $\alpha_i$ and $n-i-\alpha_i$, respectively.
            But none of those deviations is profitable, since the strategy profile $\bsigma$ is an NE.
            For player $\Box$, this means that he gets at least the expected payoff $\alpha_i$ from $f_i$, i.e., that we have $\gamma_i \geq \alpha_i$.
            For player $\triangle$, this means that she gets at least the expected payoff $n-i-\alpha_i$ from $f_i$, i.e., that we have $\gamma_i \leq \alpha_i$.
        \end{proof}      

        On the other side, the same argument implies that player $\Box$ and $\triangle$ deterministically go to the vertex $j_i$, and that we have, then, the following result:

        \begin{claim}\label{claim:delta}
            We have $\delta_i = \alpha_{i-1}$.
        \end{claim}

        \begin{proof}
            Since going to the gadget $\sfdef_{i-1}$ is not a profitable deviation for player $\Box$, he must get at least the expected payoff $\alpha_{i-1}$ by going to $j_i$, which implies $\delta_i \geq \alpha_{i-1}$.
            Since we have $\alpha_{i-1} = \frac{1}{2^{2^i}} > 0$ by induction hypothesis, we now know that the vertex $k_i$ is reached with positive probability, which implies that from there, player $\diamond$, and then player $\pentagon$, deterministically go to the vertex $m_i$.
            Thus, if the vertex $k_i$ is reached, player $\triangle$ gets expected payoff $n-i$.
            Since going to the gadget $\sfdef_{i-1}$ is not profitable for her either, she must get at least payoff $n-i+1-\alpha_{i-1}$ by going to $j_i$, that is, we have $\delta_i \leq \alpha_{i-1}$.
        \end{proof}

        Now that we know that the vertex $k_i$ is reached with positive probability, and therefore that from there, player $\Diamond$ and $\pentagon$ go to the vertex $m_i$, we can apply the same reasoning in that last vertex.

        \begin{claim}\label{claim:eta}
            We have $\eta_i = \alpha_{i-1}$.
        \end{claim}

        \begin{proof}
            Since going to the gadget $\sfdef_{i-1}$ is not a profitable deviation for player $\Diamond$, we have $\eta_i \geq \alpha_{i-1}$.
            Since it is not a profitable deviation for player $\pentagon$, we have $\eta_i \leq \alpha_{i-1}$.
        \end{proof}

        We can now connect $\alpha_{i-1}$ and $\alpha_i$.
        By \Cref{claim:beta}, from the vertex $c_i$, player $\Diamond$ proceeds to a strict randomisation.
        Since going deterministically to $d_i$ or to $g_i$ must not be profitable deviations, he must get exactly the same expected payoff on both sides.
        From $d_i$, using \Cref{claim:gamma}, we find that he gets expected payoff $\alpha_i$.
        From $g_i$, using \Cref{claim:delta} and \Cref{claim:eta}, we find that he gets expected payoff $\alpha_{i-1}^2$.
        Using the induction hypothesis, we obtain:
        $$\alpha_i = \alpha_{i-1}^2 = \frac{1}{2^{2^i}},$$
        as desired.
    \end{proof}

    By taking $i = n$, we can conclude.
\end{proof}

\subsection{Proof of \cref{lem:double_exp_NE_exists}}

\doubleExpNEExists*

\begin{proof}
    The proof of \Cref{lem:if_NE_then_double_exp} defines almost completely how such a stationary NE must behave.
    The only probability that was not completely defined was $\beta_i$: let us fix it to $\frac{1}{2}$, and let us call this strategy profile $\bsigma$.
    Let us now check that no player has a profitable deviation.

    \subp{Player $\Circle$}
    Player $\Circle$ has expected payoff $1$ when following $\bsigma$, and there is no terminal vertex that could give her a better reward: therefore, she has no profitable deviation.

    \subp{Player $\Box$}
    From each vertex of the form $d_i$, the play continues deterministically to $f_i$ and then ends in $t_{i2}$ with probability $\gamma_i = \alpha_i$, giving player $\Box$ expected payoff $\alpha_i$.
    Thus, going to the gadget $\sfdef_i$ is not a profitable deviation.
    From vertices of the form $g_i$, the situation is analogous.

    \subp{Player $\triangle$}
    From each vertex of the form $e_i$, player $\triangle$ gets expected payoff $n-i-\alpha_i$, hence going to the gadget $\sfdef_i$ is not a profitable deviation; and the situation is analogous from the vertices of the form $h_i$.
    From each vertex $a_i$, she gets expected payoff:
    $$\frac{1}{2}(n-i-\alpha_i) + \frac{1}{2}(n-i+1-\alpha_{i-1})$$
    $$= n-i + \frac{1}{2} -\frac{1}{2}\alpha_i -\frac{1}{2} \alpha_{i-1}$$
    Since $\alpha_{i-1} \leq \alpha_0 = \frac{1}{2}$ and $\alpha_i \leq \alpha_1 = \frac{1}{4}$, this expected payoff is greater than or equal to the quantity:
    $$n-i + \frac{1}{2} - \frac{1}{2} \frac{1}{4} - \frac{1}{2} \frac{1}{2}$$
    $$= n-i + \frac{1}{8},$$
    hence going to the terminal vertex $t_{i0}$ is not a profitable deviation either.

    \subp{Player $\Diamond$}
    From each vertex of the form $k_i$, player $\Diamond$ gets expected payoff $\alpha_{i-1}$, hence going to the gadget $\sfdef_{i-1}$ is not a profitable deviation.
    From the vertex $c_i$, player $\Diamond$ will get expected payoff $\alpha_i = \alpha_{i-1}^2$ whether she goes to $d_i$ or to $g_i$, hence she has no profitable deviation from $c_i$ either.

    \subp{Player $\pentagon$}
    From every vertex of the form $\ell_i$, player $\pentagon$ gets expected payoff $1-\alpha_{i-1}$, hence going to the gadget $\sfdef_{i-1}$ is not a profitable deviation.
    From a vertex of the form $b_i$, player $\pentagon$ gets expected payoff:
    $$\frac{1}{2} 2 + \frac{1}{2} \left( 1-\alpha_{i-1} + \alpha_{i-1} (1- \alpha_{i-1}) \right)$$
    $$= \frac{3}{2} - \frac{1}{2} \alpha_{i-1}^2$$
    $$\geq \frac{3}{2} - \frac{1}{2} \left( \frac{1}{2} \right)^2$$
    $$= \frac{11}{8},$$
    hence she has no profitable deviation from the vertex $b_i$ either.
\end{proof}

\subsection{Proof of \cref{lem:epsilon_implies_simple_exp}}

\epsilonImpliesSimpleExp*

\begin{proof}
    Let $I$ be the smallest integer such that we have:
    $$\frac{1}{2^{2^I+1} n} \leq \epsilon.$$
    Let us consider the strategy profile $\btau$, where, for every $i$:
    \begin{itemize}
        \item from the vertex $r_i$, player $\Circle$ goes to $t_i$ with probability $\theta_i$;

        \item from the vertex $a_i$, the players go deterministically to $c_i$, where player $\Diamond$ randomises uniformly between $d_i$ and $g_i$;

        \item from the vertex $d_i$, the players go deterministically to $f_i$, where player $\Circle$ goes to $t_{i2}$ with probability $\theta_i$;

        \item from the vertex $g_i$, the players go deterministically to the vertex $j_i$, where player $\Circle$ goes to $k_i$ with probability $\theta_{i-1}$;

        \item from the vertex $k_i$, the players go deterministically to the vertex $m_i$, where player $\Circle$ goes to $t_{i4}$ with probability $\theta_{i-1}$;
    \end{itemize}
    where $\theta_i = \frac{1}{2^{2^i}}$ for $i \leq I$, and $\theta_i = 0$ for $i > I$.

    Following the same arguments as in the proof of \Cref{lem:double_exp_NE_exists}, no player has a profitable deviation anywhere in this strategy profile, except player $\Diamond$, in the gadget $\mul_I$, who now has an incentive to go deterministically to the vertex $g_I$.
    Indeed, if the vertex $d_I$ is reached, he gets expected payoff $0$, while if $g_I$ is reached, he gets $\theta_{I-1}^2 = \frac{1}{2^{2^I}}$.
    Then, since the vertex $c_I$ is itself reached with probability $\frac{1}{n}$, such a deviation is profitable by:
    $$\frac{1}{n} \frac{1}{2} \frac{1}{2^{2^I}} = \frac{1}{2^{2^I + 1} n},$$
    which is smaller than $\epsilon$ by hypothesis.
    Therefore, the strategy profile $\btau$ is an $\epsilon$-NE, and all probabilities can be written with a number of bits that is polynomial in the size of the instance (which includes $\epsilon$).
\end{proof}

\section{Appendix for \cref{sec:lowerbounds}}\label{app:LB}
\formulaimpliesNE*
\begin{proof}[Proof of \cref{prop:formulaimpliesNE}]
We first define the strategy profile $\bsigma$, then show that it satisfies the constraints, and finally that it is an NE.

    \subparagraph*{Definition of $\bsigma$}
        Let $\nu: X \to \{0, 1\}$ be a valuation satisfying the formula $\phi$.
        We define the stationary strategy profile $\bsigma$ as follows: from each clause vertex $C_i$, player $\Box$ goes deterministically to a given vertex $L$ such that the literal $L$ is satisfied by the valuation $\nu$.
        From each vertex of the form $L$, player $\Box$ goes deterministically to the vertex $\top_{L}$ if the valuation $\nu$ satisfies $L$, and to the vertex $t_\bot$ otherwise.
        From each vertex of the form $\top_L$, player $\Circle$ deterministically goes to the vertex $t_\top$ if the valuation $\nu$ satisfies $L$, and to the vertex $t_2$ otherwise.
        
        \subparagraph*{The strategy profile $\bsigma$ satisfies the constraints.}
        First, let us note that this game contains no cycle, hence the probability of never reaching any terminal vertex is $0$.

        This strategy profile never reaches the terminal vertex $t_1$, nor $t_2$.
        Therefore, player $\Circle$'s expected payoff is $1$.

        As concerns player $\Box$, he gets the payoff $0$ only when the vertex $t_\bot$ is reached, i.e. only in plays that either reach $t_\bot$ immediately from $s_0$, or that traverse a vertex $L$ such that the literal $L$ is not satisfied.
        For each $k$, exactly one of the two literals $x_k$ and $\neg x_k$ is in that situation.
        That literal is reached with probability exactly $\frac{1}{2^{k+1}}$ (it cannot be reached from $s_1$, since player $\Box$'s strategy targets only satisfied literals).
        Moreover, the probability of reaching $t_\bot$ immediately is $\frac{1}{2^{k+2}}$.
        Therefore, the probability of reaching that vertex is $\frac{1}{4} + \dots + \frac{1}{2^{n+1}} + \frac{1}{2^{n+2}} = \frac{1}{2}$, and player $\Box$'s expected payoff is $\frac{1}{2}$.

        \subparagraph*{The strategy profile $\bsigma$ is an NE.}
        Player $\Circle$ has no profitable deviation, since she gets the expected payoff $1$, and could not get more.
        As for player $\Box$, from any vertex of the form $C_i$, or $L$ where the literal $L$ is satisfied by $\nu$, he also gets the expected payoff $1$.
        From a vertex of the form $L$ where $L$ is not satisfied by $\nu$, he gets the payoff $0$, but the only deviation available consists in going to the vertex $\top_L$, from which player $\Circle$ goes to the vertex $t_2$, giving him also the payoff $0$.
\end{proof}

\NEimpliesFormula*
\begin{proof}[Proof of \cref{prop:NEimpliesFormula}]
        Let $\bsigma$ be such a stationary NE.
        For each vertex $v$ and each player $i \in \{\Circle, \Box\}$, we write $e_i(v)$ for player $i$'s expected payoff when the strategy profile $\bsigma$ is played from the vertex $v$.

        Let us first study the behavior of $\bsigma$ from a literal vertex.

        \begin{claim}
            Let $L$ be a literal, either of the form $x_k$ or of the form $\neg x_k$.
            Let $\delta = 2^{k+1}\epsilon$.
            Then, we have either $e_\square(L) \leq \delta$, or $e_\square(L) \geq 1 - 3\delta$.
        \end{claim}

        \begin{proof}
            The vertex $L$ is reached with probability at least $\frac{1}{2^{k+1}}$.
            Since no player has a deviation that is profitable by $\epsilon$ or more in the whole strategy profile, no player has a deviation that is profitable by $\delta$ or more from the vertex $L$.
            Similarly, from the vertex $L$, player $\Circle$ gets expected payoff at least $1-\delta$.
            Let us now write $\beta$ for the probability $\sigma_\square(L)(\top_L)$, and $\gamma$ for the probability $\sigma_\circ(\top_L)(t_\top)$.
            
            We have $e_\circ(L) = 1 - \beta + \beta\gamma$.
            That expected payoff must be larger than or equal to $1-\delta$, which implies:
            \begin{equation}\label{eq:3}
                \beta - \beta\gamma \leq \delta.
            \end{equation}
    
            Similarly, we have $e_\square(L) = \beta\gamma$.
            Since the deviation that consists in going deterministically to $\top_L$ is not profitable by more than $\delta$, we obtain:
            \begin{equation}\label{eq:4}
                \gamma - \beta\gamma \leq \delta.
            \end{equation}
    
            We can now separate two cases (note that those computations are valid because $\delta < \frac{1}{4}$):
            \begin{itemize}
                \item If $\beta \leq \frac{1}{2}$, then \Cref{eq:3} implies $\gamma - \frac{1}{2} \leq \delta$, i.e. $\gamma \leq 2\delta$, which implies:
                $$e_\square(L) = \beta\gamma \leq \frac{1}{2} 2\delta = \delta.$$
    
                \item If $\beta > \frac{1}{2}$, then \Cref{eq:3} implies $\frac{1}{2}(1-\gamma) \leq \delta$, i.e. $\gamma \geq 1 - 2\delta$, which implies $e_\square(L) \geq \frac{1}{2} - \delta$.
                Now, using \Cref{eq:4}, we obtain $(1-2\delta)(1-\beta) \leq \delta$, i.e.:
                $$\beta \geq \frac{1-3\delta}{1-2\delta}$$
                and finally:
                $$e_\square(L) = \beta\gamma \geq (1-2\delta) \frac{1 - 3\delta}{1 - 2\delta} = 1 - 3\delta.$$
            \end{itemize}
        \end{proof}
        
        Let us now call \emph{green} the literals $L$ such that $e_\square(L) \leq \delta$, and \emph{red} those such that $e_\square(L) \geq 1 - 3\delta$.
        As one can expect, they will correspond to true and false literals---but we still need to prove that that every clause has a green literal, and that opposite literals have different colors.

    \begin{claim}
        Every clause has a green literal.
    \end{claim}

    \begin{proof}
        Let $C_i$ be a clause of $\phi$. 
        The vertex $C_i$ is reached with probability $\frac{1}{2^{n+2}m}$.
        Therefore, we have $e_\circ(C_i) \geq 1 - \delta$, which implies $\sigma_\square(C_i)(t_1) \leq 2^{m+2}m\epsilon$.
    
        Now, since going deterministically to the vertex $t_1$ must not be profitable by more than $2^{m+2}m\epsilon$ for player $\Box$, we necessarily have $e_\square(C_i) \geq 1 - 2^{m+2}m\epsilon$.
    
        Moreover, there is necessarily one successor of the vertex $C_i$ to which player $\Box$ goes with probability at least $\frac{1}{4}$.
        Since we have shown $\sigma_\square(C_i)(t_1) \leq 2^{m+2}m\epsilon < \frac{1}{4}$, that successor is a literal vertex $L$.
        Then, we necessarily have $e_\square(L) \geq 1 - 2^{n+4} m \epsilon$, which implies that the literal $L$ is green.
    \end{proof}

    \begin{claim}
        Two opposite literals always have different colors.
    \end{claim}

    \begin{proof}
        We can now consider a general expression of player $\Box$'s expected payoff, and the following inequality:
        $$\frac{1}{2} - \epsilon \leq \sum_{k=1}^n \frac{1}{2^{k+1}} (e_\square(x_k) + e_\square(\neg x_k)) + \frac{1}{2^{n+2}} e_\square(s_1) \leq \frac{1}{2} + \epsilon.$$
        If all the quantities $e_\square(L)$ and $e_\square(s_1)$ are equal to either $0$ or $1$, the sum above ranges over the set:
        $$\left\{\left.\frac{p}{2^{n+2}} ~\right|~ p \in \{0, \dots, 2^{n+1} - 1\}\right\},$$
        and it reaches the quantity $\frac{1}{2}$ if and only if $e_\square(s_1) = 1$ and $e_\square(x_k) = 1 - e_\square(\neg x_k)$ for each $k$.
        In that case, we indeed have that two opposite literals always have different colors.

        Now, since the only constraints we have are $e_\square(L) \leq 2^{k+1} \epsilon$ or $e_\square(L) \geq 1 - 3 \times 2^{k+1} \epsilon$ for each $k$ and each $L \in \{x_k, \neg x_k\}$, and $e\square(s_1) \geq 1 - 2^{n+2}m \epsilon$, the error compared to the previously described situation is at most:
        $$\sum_{k=1}^n \frac{1}{2^{k+1}} (3 \times 2^{k+1} + 3 \times 2^{k+1}) + \frac{1}{2^{n+2}} 2^{n+2}m \epsilon$$
        $$= (6 n + m) \epsilon.$$
        Since we have $(6n + m) \epsilon + \epsilon < \frac{1}{2^{n+2}}$, this error is not sufficient to fall into the interval $\left[\frac{1}{2} - \epsilon, \frac{1}{2} + \epsilon\right]$ from a situation in which two opposite literals would have the same colour.
    \end{proof}

    Thus, the valuation that satisfies every green literal is a valuation that satisfies the formula $\phi$, which ends the proof.
\end{proof}

 \section{NP-hardness with $n$ players}
 In this section we give the reduction as in the conference version and its proof of correctness. We first recall the construction there and then the proof.

    We proceed by reduction from the $3\SAT$ problem.
    Let $\phi = \bigvee_{i=1}^m \left(L_{i1} \wedge L_{i2} \wedge L_{i3}\right)$ be a CNF formula, over the variable set $X$.

    \paragraph*{Construction of the game $\Game^\phi$}

    We define the game $\Game^\phi$ as follows.
    \begin{itemize}
        \item There is a player called \emph{Solver} and written $\solver$, and for each variable $x \in X$, there is a player $x$.

        \item For each clause $C_i$, there is a vertex $C_i$ controlled by Solver.
        The initial vertex is $C_1$.

        \item From each clause vertex $C_i$, Solver can move to the vertices $(C_i, L_{i1})$, $(C_i, L_{i2})$, and $(C_i, L_{i3})$.

        \item Each vertex of the form $(C_i, x)$ (positive literals) is a stochastic vertex.
        From there, there are two edges, each taken with probability $\frac{1}{2}$: one to the terminal vertex $t_x$, and one to the next clause vertex $C_{i+1}$ (or the vertex $s$ if $i=m$).

        \item Each vertex of the form $(C_i, \neg x)$ is controlled by player $x$.
        From there, player $x$ can move either to the terminal vertex $t_\solver$, or to the next clause vertex $C_{i+1}$ (or$s$ if $i=m$).

        \item The vertex $s$ is a stochastic vertex with two outgoing edges, each with probability $\frac{1}{2}$: one to the vertex $C_1$, one to the terminal vertex $t$.

        \item In the terminal vertex $t_i$, player $i$ receives the reward $0$, and every other player receives the reward $1$.
        In the terminal vertex $t$, every player receives the reward $1$.
    \end{itemize}

    \begin{figure}
        \centering
        \begin{tikzpicture}
    		\node[circlev, initial] (C1) {$C_1$};
    		\node[right of=C1, stoch] (x2) {$x_2$};
            \node[above of=x2, stoch] (x1) {$x_1$};
            \node[below of=x2, squarev] (nx3) {$\neg x_3$};
            \node[above of=x1] (tx1) {$t_{x_1}:~\stackrel{x_1}{0}\stackrel{\forall}{1}$};
            \node[below right of=tx1] (tx2) {$t_{x_2}:~\stackrel{x_2}{0}\stackrel{\forall}{1}$};
            \node[below of=nx3] (ts) {$t_\solver:~\stackrel{\solver}{0}\stackrel{\forall}{1}$};
            \node[right of=x2, circlev] (C2) {$C_2$};
            \node[right of=C2] (dots) {$\dots$};
            \node[right of=dots, stoch] (s) {$s$};
            \node[right of=s] (t) {$t:~\stackrel{\forall}{1}$};
    
            \path[->] (C1) edge (x1);
            \path[->] (C1) edge (x2);
            \path[->] (C1) edge (nx3);
            \path[->] (x1) edge (tx1);
            \path[->, bend right] (x2) edge (tx2);
            \path[->] (nx3) edge (ts);
            \path[->] (x1) edge (C2);
            \path[->] (x2) edge (C2);
            \path[->] (nx3) edge (C2);
            \path[->] (C2) edge (dots);
            \path[->] (dots) edge (s);
            \path[->] (s) edge (t);
            \path[->, bend left] (s) edge (C1);
		\end{tikzpicture}
        \caption{A reduction from $3\SAT$}
        \label{fig:sat}
    \end{figure}

    That game is depicted by \Cref{fig:sat}, in a case where $C_1 = x_1 \vee x_2 \vee \neg x_3$.
    Solver controls the circle vertices, and the square vertex is controlled by player $x_3$.
    We simply write $L$ for every vertex of the form $(C, L)$.
    The symbol $\forall$ should be interpreted as \emph{``every (other) player''}.
    Outgoing edges of stochastic vertices all have probability $\frac{1}{2}$.

    \paragraph*{Nash equilibria from a satisfiable formula}

    We now state and prove the following result.

    \begin{proposition}
        If the formula $\phi$ is satisfiable, then there exists a Nash equilibrium in the game $\Game^\phi$ such that Solver's expected payoff is $1$.
    \end{proposition}
    \begin{claimproof}
We first define the strategy profile $\bsigma$.

    \subparagraph*{Definition of $\bsigma$}
        Let $\nu: X \to \{0, 1\}$ be a valuation satisfying the formula $\phi$.
        We define the stationary strategy profile $\bsigma$ as follows: from each clause vertex $C_i$, Solver goes deterministically to a given vertex $(C_i, L)$ such that the literal $L$ is satisfied by the valuation $\nu$.
        From each vertex of the form $(C_i, \neg x)$, player $x$ goes deterministically to the vertex $C_{i+1}$ (or $s$).
        
        \subparagraph*{The strategy profile $\bsigma$ is an NE and satisfies the constraint.}
        First, let us note that since all cycles in this game traverse at least the stochastic vertex $s$, the probability of never reaching any terminal vertex is $0$.

        The probability of reaching the terminal vertex $t_\solver$ is also $0$.
        Thus, Solver gets expected payoff $1$, and has no profitable deviation, since no terminal vertex gives her a better reward.

        Let us now consider a variable player $x$.
        If we have $\nu(x) = 1$, then no vertex of the form $(C, \neg x)$ is ever visited, by definition of Solver's strategy.
        Thus, player $x$ is never given the opportunity to deviate.
        If we have $\nu(x) = 0$, then no vertex of the form $(C, x)$ is visited, and consequently, the probability of reaching the vertex $t_x$ is $0$, which means that player $x$'s expected payoff is $1$ and that such player has no profitable deviation, which ends the proof.
        \end{claimproof}

        \paragraph*{Non-existence of $\epsilon$-Nash equilibria from unsatisfiable formula}

    For the converse, we show the following stronger result.

    \begin{proposition}
        For $m$ large enough, if the formula $\phi$ is not satisfiable, then there is no $\epsilon$-Nash equilibrium in the game $\Game^\phi$ such that Solver's expected payoff is greater than or equal to $1-\epsilon$, where $\epsilon = 2^{-3m}$.
    \end{proposition}
        \begin{proof}
        Let us assume that there is no valuation satisfying the formula $\phi$.
        Let $\bsigma$ be a stationary strategy profile in the game $\Game^\phi$, such that Solver's expected payoff is at least $1-\epsilon$: let us show that $\bsigma$ is not an $\epsilon$-NE.

        Let $\nu$ be the valuation defined as follows: for every variable $x$, we have $\nu(x) = 1$ if and only if there is a vertex of the form $(C, x)$ such that $\sigma_\solver(C)(C, x) \geq \frac{1}{3}$.
        Since $\phi$ is not satisfiable, there is a clause $C_i$ that is not satisfied by $\nu$.
        On the other hand, since the vertex $C_i$ has only three outgoing edges, there is at least one literal $L$ such that $\sigma_\solver(C_i)(C_i, L) \geq \frac{1}{3}$.
        Let us pick one.

\subp{The literal $L$ is necessarily negative}
        If $L = x$ is a positive literal, then by definition of $\nu$, we have $\nu(x) = 1$, which means that the valuation $\nu$ actually satisfies the clause $C_i$: this case is therefore impossible.
        Therefore, there is a variable $x$ such that $L = \neg x$.

\subp{A deviation for player $x$}
Since the valuation $\nu$ does not satisfies the clause $C_i$, we have $\nu(x) = 1$, which means that there is a clause $j$ such that $\sigma_\solver(C_j)(C_j, x) \geq \frac{1}{3}$.
Then, from the vertex $(C_j, x)$, with probability $\frac{1}{2}$, the vertex $t_x$ is reached and player $x$ gets the payoff $0$.

On the other hand, since Solver's expected payoff is at least $1-\epsilon$, from the vertex $(C_i, \neg x)$, player $x$ cannot go to the terminal vertex $t_\solver$ with large probability.

\begin{claim}\label{claim:payoff_solver}
    We have $\sigma_x(C_i, \neg x)(t_\solver) \leq \frac{3 \times 2^m - 2}{2 \times 2^{3m}-1}$.
\end{claim}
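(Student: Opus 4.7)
The plan is to exploit the fact that since Solver's expected payoff is at least $1-\epsilon$ (with $\epsilon = 2^{-3m}$) and Solver receives reward $1$ on every terminal except $t_\solver$, we must have $P_\solver \defas \prob_{\bsigma}(\text{reach }t_\solver) \leq \epsilon$, and then to localise this global bound onto the single edge from $(C_i, \neg x)$ to $t_\solver$.

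For each $j \in \{1, \ldots, m\}$, let $\rho_j$ denote the probability of reaching $C_j$ from $C_1$ without first hitting any terminal or the vertex $s$, and let $T_j$ denote the probability of hitting $t_\solver$ in the segment between $C_j$ and $C_{j+1}$, namely
\[
    T_j \;=\; \sum_{k \,:\, L_{j,k} = \neg y} \sigma_\solver(C_j)\bigl((C_j, L_{j,k})\bigr) \cdot \sigma_y\bigl((C_j, \neg y)\bigr)(t_\solver).
\]
Let also $c = \rho_{m+1}$ be the probability of reaching $s$ from $C_1$ in one round. A geometric-sum argument, accounting for the fact that the play restarts from $C_1$ with probability $c/2$ at the end of each round, yields
\[
P_\solver \;=\; \frac{2}{2-c}\sum_{j=1}^m T_j \rho_j.
\]
Writing $p = \sigma_x(C_i, \neg x)(t_\solver)$ and using that $\sigma_\solver(C_i)\bigl((C_i, \neg x)\bigr) \geq 1/3$ (already established earlier in the proof), we get $T_i \geq p/3$, hence $P_\solver \geq \tfrac{2 p\rho_i}{3(2-c)}$. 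Combined with $P_\solver \leq \epsilon$, this yields $p \leq \frac{3\epsilon(2-c)}{2\rho_i}$.

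The crux is then a tight lower bound on $\rho_i$. Using the recursion $\rho_{j+1} = \rho_j q_j$ with $q_j = \tfrac{1}{2} + \tfrac{A_j^-}{2} - T_j \geq \tfrac{1}{2} - T_j$ (where $A_j^-$ is the total mass Solver puts on negative literals at $C_j$), I would inductively establish
\[
\rho_j \;\geq\; \bigl(\tfrac{1}{2}\bigr)^{j-1}\prod_{k<j}(1-2T_k) \;\geq\; \bigl(\tfrac{1}{2}\bigr)^{j-1}\Bigl(1 - 2\sum_{k<j}T_k\Bigr).
\]
A bootstrap then controls $\sum_k T_k$: assuming tentatively that $\rho_k \geq \tfrac{1}{2}(1/2)^{k-1}$, we get $T_k \leq \epsilon/\rho_k \leq 2^k\epsilon$, hence $\sum_k T_k \leq 2^{m+1}\epsilon = 2^{-2m+1}$, which is small enough to verify the bootstrap hypothesis and to conclude $\rho_i \geq \bigl(\tfrac{1}{2}\bigr)^{i-1}\bigl(1 - 2^{-2m+2}\bigr)$.

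Substituting back into $p \leq \frac{3\epsilon(2-c)}{2\rho_i}$ and noting that $c \leq \rho_i$ keeps $(2-c)/2$ close to $1$, one recovers $p \leq (1+o(1)) \cdot \tfrac{3}{2} \cdot 2^{-2m}$, matching the leading term of $\tfrac{3\cdot 2^m - 2}{2^{3m+1}-1}$. The main obstacle is to turn this asymptotic argument into the stated exact fraction: one must track the interdependence of $\rho_i$, $c$, and the $T_j$'s carefully through the inequalities rather than bound them independently. The extremal configuration (which guides the bookkeeping) is the one reached when Solver deterministically picks a positive literal at each clause $C_j$ for $j\neq i$ and picks $(C_i,\neg x)$ with probability exactly $1/3$, with player $x$ saturating $P_\solver = \epsilon$; in that case the computation collapses algebraically to $p = \tfrac{3\cdot 2^m - 2}{2^{3m+1}-1}$, which shows the bound is tight and dictates the exact constants to be tracked in the general argument.
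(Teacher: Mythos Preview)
Your decomposition $P_\solver=\tfrac{2}{2-c}\sum_j T_j\rho_j$ is correct, and the bootstrap can be made rigorous to give $\rho_i\geq(1+o(1))2^{-(i-1)}$, from which the asymptotic bound $p\leq(1+o(1))\tfrac{3}{2}\cdot 2^{-2m}$ follows. But, as you yourself concede, this does not reach the stated fraction; the proposal stops at a sketch precisely where the claim requires an exact inequality.

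The paper's route is much shorter and avoids the bookkeeping entirely. It does exactly what your final paragraph describes, but treats the extremal configuration as the \emph{argument} rather than as a guide to constants: it asserts that for fixed $q$ the probability $p$ of reaching $t_\solver$ is minimised when $i=m$, when $\sigma_\solver(C_m)(C_m,\neg x)=\tfrac{1}{3}$, and when every other literal Solver picks is positive; writes down the single fixed-point equation for $p$ in that configuration,
\[
p=\frac{1}{2^{m-1}}\Bigl(\tfrac{1}{3}\bigl(q+(1-q)\tfrac{1}{2}p\bigr)+\tfrac{2}{3}\cdot\tfrac{1}{2}\cdot\tfrac{1}{2}p\Bigr),
\]
solves it to $p=\tfrac{2q}{3\cdot 2^m-2+q}$; and reads off the bound on $q$ from $p\leq 2^{-3m}$ in one line of algebra. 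Note in particular the additional minimisation over $i$: your extremal configuration keeps $i$ arbitrary, and for $i<m$ the resulting $q$ is strictly smaller than the claimed fraction, so your ``collapse'' to $\tfrac{3\cdot 2^m-2}{2^{3m+1}-1}$ is only correct when $i=m$. The key difference is that the paper uses the extremal configuration (over all parameters, including $i$) directly as a universal lower bound on $p$ in terms of $q$, which turns the whole thing into a two-line computation; your analytic decomposition would instead have to recover every lost constant through the chain of inequalities, which is why it stalls at the asymptotic.
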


\begin{claimproof}
    Call $p$ the probability, when following $\bsigma$ from the vertex $C_1$, that the vertex $t_\solver$ is eventually reached.
    Our constraint imposes $p \leq \epsilon$.
    On the other hand, let $q = \sigma_x(C_i, \neg x)(t_\solver)$.
    The probability $p$ is minimal, for a fixed value of $q$, when we have $i = m$; when from $C_m$, the probability of going to $(C_m, \neg x)$ is exactly $\frac{1}{3}$; and when all other vertices of the form $(C, L)$ that are visited are stochastic vertices.
    In such a case, we have:
    $$p = \frac{1}{2^{m-1}} \left( \frac{1}{3} \left(q + (1-q) \frac{1}{2} p\right) + \frac{2}{3} \frac{1}{2} \frac{1}{2} p \right)$$
    (the probability of reaching $C_m$ is $\frac{1}{2^{m-1}}$; then, with probability $\frac{1}{3}$, we go to the vertex $(C_m, \neg x)$, and with probability $\frac{2}{3}$, to a stochastic vertex).
    By solving this equation, we find:
    $$p = \frac{2q}{3 \times 2^m - 2 + q}.$$
    In the general case, the probability $p$ is therefore greater than or equal to the above quantity.
    Thus, the inequality $p \leq \epsilon$ implies:
    $$\frac{2q}{3 \times 2^m - 2 + q} \leq 2^{-3m},$$
    which implies:
    $$q \leq \frac{3 \times 2^m - 2}{2 \times 2^{3m}-1}$$
    as desired.
\end{claimproof}

%    \paragraph*{Conclusion.}
    To conclude, given the formula $\phi$, we define the game $\Game^\phi$ as above, and $\epsilon = 2^{-3m}$ (which can be written with polynomially many bits since $m$ is given in unary by the formula).
    We show that, for $m$ large enough, these games constructed are such that the unsatisfiable instances are mapped to the negative instance of the approximate existence problem (no approximate Nash equilibria satisfies the constraints), while 
%    satisfy the guarantee that is given on instances of our approximated problem, 
%    and that
     the satisfiable formulas, and only them, are mapped to positive instances.
    Therefore, our problem is $\NP$-hard.
\end{proof}

Player $x$ has, therefore, the possibility to deviate: let us consider the strategy $\sigma'_x$, that consists in deterministically going to the vertex $t_\solver$ from the vertex $(C_i, L)$ (and in following $\sigma_x$ on other vertices).
Let us now compute how profitable such a deviation is.

\subp{Minimal profit}
The case where this deviation is the least profitable as the one in which $i = j = m$, in which all other vertices of the form $(C, L)$ that are visited are stochastic vertices, and in which $\sigma_\solver(C_m)(C_m, x) = \sigma_\solver(C_m)(C_m, \neg x) = \frac{1}{3}$.
Let $q = \sigma_x(C_i, \neg x)(t_\solver)$.
Then, the probability of reaching the terminal vertex $t_x$ when following the strategy profile $\bsigma$ is $p$ such that:
$$p = \frac{1}{2^{m-1}} \left(\frac{1}{3} \left(\frac{1}{2} + \frac{1}{2} p\right) + \frac{1}{3} (1 - q) \frac{1}{2} p + \frac{1}{3} \frac{1}{2} \frac{1}{2} p\right)$$
(the probability of reaching $C_m$ is $\frac{1}{2^{m-1}}$; then, with probability $\frac{1}{3}$, we go to $(C_m, x)$; with probability $\frac{1}{3}$, to $(C_m, \neg x)$; and with probability $\frac{1}{3}$, to none of those; and finally, if $s$ is reached, with probability $\frac{1}{2}$ we go back to $C_1$ and have again probability $p$ of reaching $t_s$).
By solving the above equation, we find:
        $$p = \frac{2}{6 \times 2^m + 2q - 5}.$$
Applying \Cref{claim:payoff_solver}, we obtain:
$$p \geq \frac{2}{6 \times 2^m + 2 \frac{3 \times 2^m - 2}{2 \times 2^{3m}-1} - 5}
= \frac{4 \times 2^{3m} - 2}{12 \times 2^{4m} - 10 \times 2^{3m} + 1}.$$

On the other hand, when following the strategy profile $(\bsigma_{-x}, \sigma'_x)$, that probability becomes $p'$ such that:
$$p' = \frac{1}{2^{m-1}} \left(\frac{1}{3} \left(\frac{1}{2} + \frac{1}{2} p'\right) + 0 + \frac{1}{3} \frac{1}{2} \frac{1}{2} p'\right)$$
(same computation, but from the vertex $(C_m, \neg x)$, the probability of reaching $t_x$ is now $0$ because player $x$ goes deterministically to $t_\solver$).
That is:
$$p' = \frac{2}{6 \times 2^m - 3}.$$
Thus, player $x$'s expected payoff is $1-p$ in the first case and $1-p'$ in the second one, and the deviation is profitable by:
$$(1-p') - (1-p) = p-p'  \geq \frac{4 \times 2^{3m} - 2}{12 \times 2^{4m} - 10 \times 2^{3m} + 1} -\frac{2}{6 \times 2^m - 3}$$
$$= \frac{8 \times 2^{3m} - 12 \times 2^m + 4}{(12 \times 2^{4m} - 10 \times 2^{3m} + 1)(6 \times 2^m - 3)}.$$
When $m$ tends to $+\infty$, this quantity is equivalent to $\frac{1}{9} 2^{-2m}$.
Therefore, for large values of $m$, it is greater than $\epsilon = 2^{-3m}$, which means that this deviation is then always (for $m$ large enough) profitable by more than $\epsilon$: the strategy profile $\bsigma$ is not an $\epsilon$-NE, and there is consequently no $\epsilon$-NE where Solver's expected payoff is greater than or equal to $1-\epsilon$.
\end{document}